\providecommand{\algorithmname}{Algorithm}
\DeclareMathOperator*{\argmin}{arg\,min}
\newcommand{\manuallabel}[2]{\def\@currentlabel{#2}\label{#1}}
\pgfplotsset{compat=1.14}
\newtheorem{lemma}{Lemma}
\newtheorem{theorem}{Theorem}
\newtheorem{dfn}{Definition}
\newtheorem{remark}{Remark}
\newtheorem{cor}{Corollary}
\newtheorem{conj}{Open Problem}
\newtheorem{assump}{Assumption}
\title{Optimal Rates of Teaching and \\ Learning Under Uncertainty}
\author{Yan Hao Ling and Jonathan Scarlett}
\newcommand{\p}{\mathbb{P}}
\newcommand{\e}{\mathbb{E}}
\newcommand{\bh}{Bhattacharyya }
\newcommand{\mutilde}{\tilde{\mu}}
\begin{document}
    \maketitle
    \begin{abstract}
        In this paper, we consider a recently-proposed model of teaching and learning under uncertainty, in which a teacher receives independent observations of a single bit corrupted by binary symmetric noise, and sequentially transmits to a student through another binary symmetric channel based on the bits observed so far.  After a given number $n$ of transmissions, the student outputs an estimate of the unknown bit, and we are interested in the exponential decay rate of the error probability as $n$ increases.  We propose a novel block-structured teaching strategy in which the teacher encodes the number of 1s received in each block, and show that the resulting error exponent is the binary relative entropy $D\big(\frac{1}{2}\|\max(p,q)\big)$, where $p$ and $q$ are the noise parameters.  This matches a trivial converse result based on the data processing inequality, and settles two conjectures of [Jog and Loh, 2021] and [Huleihel, Polyanskiy, and Shayevitz, 2019].  In addition, we show that the computation time required by the teacher and student is linear in $n$. We also study a more general setting in which the binary symmetric channels are replaced by general binary-input discrete memoryless channels. We provide an achievability bound and a converse bound, and show that the two coincide in certain cases, including (i) when the two channels are identical, and (ii) when the student-teacher channel is a binary symmetric channel.  More generally, we give sufficient conditions under which our learning rate is the best possible for block-structured protocols.
    \end{abstract}
    
    \long\def\symbolfootnote[#1]#2{\begingroup\def\thefootnote{\fnsymbol{footnote}}\footnote[#1]{#2}\endgroup}
    
    \symbolfootnote[0]{The authors are with the  Department of Computer Science and Department of Mathematics, School of Computing, National University of Singapore (NUS). Jonathan Scarlett is also with the Institute of Data  Science, NUS. Emails: \url{e0174827@u.nus.edu};  \url{scarlett@comp.nus.edu.sg}.
    
    This work was presented in part at the 2021 IEEE International Symposium on Information Theory (ISIT).}
    
    \section{Introduction}
    
    In several societal and technological domains, one is interested in how agents interact with their environment and with each other to attain goals such as learning information about the environment, conveying this information to other agents, and reaching a common consensus. While a comprehensive theoretical understanding of such problems would likely require highly sophisticated mathematical models, even the simplest models come with unique insights and challenges.
    
    In this paper, we study a recently-proposed model of teaching and learning under uncertainty \cite{jog2020teaching,onebit}, in which a teacher observes noisy information regarding an unknown 1-bit quantity $\Theta$ (the environment), and seeks to convey information to a student via a noisy channel to facilitate learning $\Theta$.  We establish the optimal learning rate (i.e., exponential decay of the error probability) of this problem, thereby settling two notable conjectures made in \cite{jog2020teaching,onebit} described in detail below.  In addition, we study a generalization of  the problem to more general binary-input discrete memoryless channels, and obtain the optimal learning rate in certain special cases of interest.
    
    \subsection{Model and Definitions}
    \label{model-dfn}
    
    \begin{figure}
        \begin{centering}
            \includegraphics[width=0.99\columnwidth]{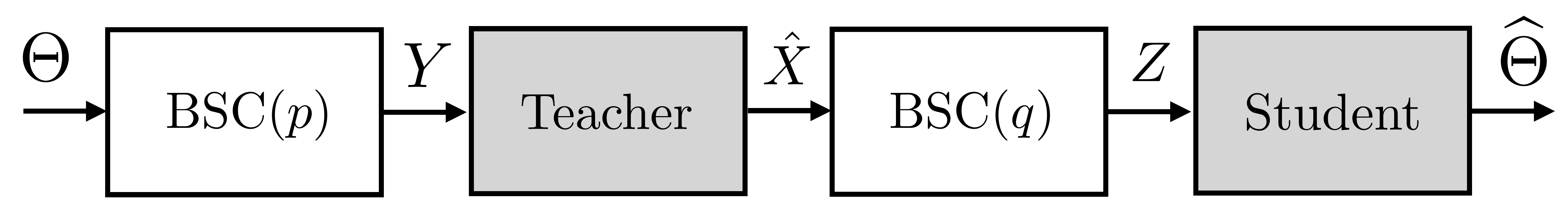}
            \par
        \end{centering}
        
        \caption{Illustration of the problem setup. \label{fig:setup}}
    \end{figure}
    
    We first formalize the model, which is depicted in Figure \ref{fig:setup}.  The unknown quantity of interest is a binary random variable $\Theta$ that takes either value in $\{0, 1\}$, each with probability $\frac{1}{2}$.
    
    At each time step $i \in \{1,\dotsc,n\}$, the {\em teacher} observes $\Theta$ through BSC($p$), a binary symmetric channel with error rate $p \in \big(0,1/2\big)$:
    \begin{equation}
        \p(Y_i=\Theta) = 1-p, ~~~ \p(Y_i=1-\Theta) = p.
    \end{equation}
    The binary symmetric channel is \textit{memoryless}, meaning that the $Y_i$'s are conditionally independent given $\Theta$.  In Section \ref{sec:dmc}, we will study more general and possibly non-symmetric binary-input discrete memoryless channels.
    
    At time $i$, the teacher then transmits binary information $\hat{X}_i$ to the {\em student} through another binary symmetric channel BSC($q$), where $q \in \big(0,1/2\big)$. Thus, denoting the student's observations by $Z_i$, we have 
    \begin{equation}
        \p(\hat{X}_i = Z_i) = 1-q, ~~~ \p(\hat{X}_i = 1-Z_i) = q,
    \end{equation}
    and the $Z_i$ are conditionally independent given $\hat{X}_i$.  Importantly, $\hat{X}_i$ must only be a function of $Y_1, ,\ldots, Y_i$; the teacher cannot look into the future.
    
    At time $n$, having received $Z_1, \ldots, Z_n$, the student makes an estimate of $\Theta$, which we denote by $\hat{\Theta}_n$ (or sometimes simply $\hat{\Theta}$).  Then, the learning rate is defined as 
    \begin{equation}
        \mathcal{R} = \limsup_{n\rightarrow \infty} \left\{ -\frac1n \ln \mathbb{P}(\hat{\Theta}_n \neq \Theta)\right\}.
    \end{equation}
    We assume that the noise parameters $p$ and $q$ are known to both the teacher and student; removing this assumption may be an interesting direction for future work.
    
    \subsection{Existing Results}
    
    The preceding motivation and setup follows that of Jog and Loh \cite{jog2020teaching}, and the same problem was also considered with different motivation and terminology by Huleihel, Polyanskiy, and Shayevitz \cite{onebit}, who framed the problem using the terminology of relay channels.  While relay channels are already a central topic in information theory, the authors of \cite{onebit} further motivated their study via the more recent \emph{information velocity} problem, which was posed by Yury Polyanskiy \cite{onebit} and is also captured under a general framework studied by Rajagopalan and Schulman \cite{schulman_1994}.  In that problem, the goal is to reliably transmit a single bit over a long chain of relays while maintaining a non-vanishing ratio between the number of hops and the total transmission time.  Below we will discuss a conjecture (for the 2-hop setting) made in \cite{onebit} that is directly inspired by the connection with information velocity.
    
    It is known that the student's optimal strategy is the maximum likelihood decoder (assuming the teacher's strategy is known) \cite{onebit,jog2020teaching}.  In \cite{jog2020teaching}, it was discussed that optimal strategies can be difficult to analyze in general, and accordingly, the following two simpler student strategies were considered:
    \begin{itemize}
        \item {\em Majority rule:} $\Theta_n$ is chosen according to the majority of the observations $Z_1, \ldots, Z_n$.
        \item {\em $\epsilon$-majority rule:} $\Theta_n$ is  the majority of the last $\epsilon\cdot n$ observations for some $\epsilon \in (0,1)$, with the rough idea being to allow time for the teacher to learn $\Theta$ first.
    \end{itemize}
    These student strategies were analyzed together with the following teacher strategies:	
    \begin{itemize}
        \item {\em Simple forwarding:} The teacher directly forwards its observation at each time step, i.e., $\hat{X}_i = Y_i$.
        \item {\em Cumulative teaching:} The teacher transmits its best estimate of $\Theta$ at each time step, i.e., $\hat{X}_i$ is the majority value among $Y_1, \ldots, Y_i$. 
    \end{itemize}
    It was demonstrated in \cite{jog2020teaching} that none of these strategies uniformly outperform one an another, and that any combination of them falls short of the $D(1/2\|\max(p,q))$ upper bound (i.e., converse) based on data processing arguments (here and subsequently, $D(p_1\|p_2)$ denotes the relative entropy between Bernoulli distributions with parameters $p_1$ an $p_2$).  The optimal learning rate was left as an open problem, and it was conjectured that at least part of the weakness is in the upper bound, i.e., that one cannot attain a rate of $D(1/2\|\max(p,q))$.
    
    In \cite{onebit}, both simple forwarding and cumulative teaching (i.e., relaying) were considered, along with a block-structured variant that only updates the majority every $\sqrt{n}$ time steps.  These strategies were analyzed alongside the optimal maximum-likelihood learning (i.e., decoding) rule.  While the learning rates attained again fall short of the $D(1/2\|\max(p,q))$ upper bound, it was shown that one comes within a factor $\frac{3}{4}$ when $p=q$ and $p \to \frac{1}{2}$ (i.e., the high noise setting).  It was conjectured in \cite{onebit} that using a more sophisticated protocol, this constant $\frac{3}{4}$ could be made arbitrarily close to one in this high-noise limit.  As hinted above, the motivation for this conjecture was the fact that positive information velocity is known to be attainable; the authors of \cite{onebit} discuss that for this to be possible, the 1-hop and 2-hop exponents should match in the high-noise limit so that ``information propagation does not slow down''.
    
    The study of \cite{jog2020teaching} was inspired by earlier works on social learning \cite{vives1993fast,jadbabaie2013information,molavi2017foundations}, and most similar to \cite{harel2020}.  Since these are less directly relevant to our work, we refer the reader to \cite{jog2020teaching} for a detailed discussion of the differences.
    
    As we will see shortly, our main result disproves the above-mentioned conjecture of \cite{jog2020teaching}, and not only confirms the conjecture of \cite{onebit} for the regime $p \to \frac{1}{2}$, but also significantly strengthens it by extending it to all $p \in \big(0,\frac{1}{2}\big)$.  
    
    %
    %
    %
    
    \section{Optimal Rate Under Binary Symmetric Noise} \label{sec:bsc}
    
    In this section, we focus on the above-described setup in which both channels in the system are binary symmetric channels.  We turn to more general binary-input discrete memoryless channels in Section \ref{sec:dmc}.
    
    \subsection{Statement of Optimal Learning Rate}
    
    Our first main result is stated as follows.
    
    \begin{theorem} \label{thm:main}
        Under the preceding setup, for any $p,q \in \big(0,1/2\big)$, let $\mathcal{R}^*(p,q)$ be the supremum of learning rates across all teaching and learning protocols. Then
        \begin{equation}
            \mathcal{R}^*(p,q) \ge D\Big(\frac{1}{2} \,\Big\| \max(p,q)\Big),
        \end{equation}
        where $D(a\|b) = a \ln \frac{a}{b} + (1-a) \ln\frac{1-a}{1-b}$ denotes the binary relative entropy function (in base $e$).
    \end{theorem}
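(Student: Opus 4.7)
The plan is to prove achievability via a block-structured scheme, following the description in the abstract. Fix a block length $L = L(n)\to\infty$ with $L/n\to 0$ (e.g., $L = \lfloor\log n\rfloor$) and partition the $n$ time slots into $K = \lfloor n/L\rfloor$ consecutive blocks. During block $b$ the teacher observes the $L$ fresh samples $Y^{(b)}_1,\ldots,Y^{(b)}_L$ and records the count $S_b = \sum_j Y^{(b)}_j \in \{0,1,\ldots,L\}$. During the subsequent block $b+1$ the teacher transmits a codeword $c(S_b) \in \{0,1\}^L$, where $c\colon\{0,1,\ldots,L\}\to\{0,1\}^L$ is a carefully chosen encoder with good error-exponent properties for $\mathrm{BSC}(q)$ at vanishing rate; natural candidates include an antipodal expurgated random code with $c(L-s) = \overline{c(s)}$. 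In the first block there is no previous count, so the teacher may transmit arbitrary bits. The student outputs the maximum-likelihood estimate $\hat\Theta = \argmax_\theta P(Z^n\mid\Theta = \theta)$.

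The error analysis goes through the Bhattacharyya coefficient. Because the blocks $Y^{(b)}$ are conditionally independent given $\Theta$, and each transmission $\hat X^{(b+1)}$ depends only on $Y^{(b)}$, the received blocks $Z^{(b+1)}$ are also conditionally independent given $\Theta$. Writing $B_{\text{block}} := \sum_{z\in\{0,1\}^L}\sqrt{P(z\mid 0)P(z\mid 1)}$ for the per-block Bhattacharyya coefficient, the standard Chernoff--Bhattacharyya bound gives $\mathbb{P}(\hat\Theta\neq\Theta) \le B_{\text{block}}^{K-1}$, and hence a learning rate of at least $-\log B_{\text{block}}/L$. The whole result therefore reduces to the per-block bound
\[
B_{\text{block}} \;\le\; \exp\!\bigl(-L D(1/2\|p)\bigr) + \mathrm{poly}(L)\cdot \exp\!\bigl(-L D(1/2\|q)\bigr),
\]
from which $-\log B_{\text{block}}/L \to D(1/2\|\max(p,q))$ as $L\to\infty$.

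To obtain this two-term bound, expand $P(z\mid\theta) = \sum_s\mathbb{P}(S_b=s\mid\theta)\,P(z\mid c(s))$ and split $\sqrt{P(z\mid 0)P(z\mid 1)}$ through the inequalities $\sqrt{a+b}\le \sqrt{a}+\sqrt{b}$ and $\sqrt{\sum_s x_s}\le \sum_s\sqrt{x_s}$ into a ``diagonal'' ($s=s'$) and an ``off-diagonal'' ($s\neq s'$) contribution. The diagonal part evaluates cleanly, independently of the encoder, to $B(\Theta\to S_b) = (2\sqrt{p(1-p)})^L = \exp(-L D(1/2\|p))$, giving the first term. The off-diagonal part takes the form $\sum_{s\neq s'}\sqrt{\mu^0_s\nu_{s'}}\,\tau^{d(c(s),c(s'))}$ with $\tau = 2\sqrt{q(1-q)} = \exp(-D(1/2\|q))$ and with $\mu^0,\nu$ the two binomial laws of $S_b$ under $\Theta\in\{0,1\}$, and this is where the choice of encoder matters.

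The main obstacle I anticipate is controlling this off-diagonal sum. A naive bound through the minimum Hamming distance is too weak: Plotkin's bound forces the minimum distance of any $(L+1)$-codeword binary code of length $L$ to be at most $(L+1)/2$, which would yield only the rate $D(1/2\|q)/2$. The remedy is to exploit the joint concentration of $\sqrt{\mu^0_s\nu_{s'}}$: since $\mu^0$ peaks near $s=Lp$ and $\nu$ peaks near $s' = L(1-p)$, the product is negligible except on nearly-antipodal index pairs with $s+s'\approx L$. For an antipodal encoder these dominant pairs correspond to codeword pairs with Hamming distance close to $L$, producing the desired $\tau^L$ factor, while the remaining pairs carry exponentially small weight through $\sqrt{\mu^0_s\nu_{s'}}$ and should be absorbable into the $\mathrm{poly}(L)$ correction. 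Carrying out this calculation with a carefully chosen smooth antipodal code---so that the blow-up from summing $\tau^{-|\delta|}$ over deviations $\delta = s+s'-L$ is indeed polynomial in $L$---is the key technical step. Once the per-block bound is in place, the claimed rate follows, and the teacher and student computations (counting, encoding, and block-wise ML decoding) are each linear in $n$.
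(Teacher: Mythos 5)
Your proposal correctly identifies the high-level architecture of the paper's proof: a block structure with block length $o(n)$ and $\omega(1)$, conditional independence of the received blocks given $\Theta$, the reduction of the learning rate to a per-block Bhattacharyya coefficient, and the mixture decomposition over the latent count $S_b$ (the paper's $\alpha$). You also correctly diagnose the central obstacle: a worst-case minimum-distance bound on the encoder $c$ is killed by Plotkin, so one must exploit the fact that the binomial weights $\sqrt{\mu^0_s\nu_{s'}}$ concentrate on nearly antipodal index pairs. Up to this point you are aligned with the paper (which additionally reduces to $p=q$ at the outset, and which does not need a diagonal/off-diagonal split because it bounds every $(s,s')$ term by the same quantity).

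The genuine gap is that you never exhibit a code for which the off-diagonal sum can actually be closed, and the candidates you float would not obviously do the job. A random or expurgated code gives you control only over the minimum or typical pairwise distance, but what the argument requires is a very specific \emph{pointwise} trade-off: for every pair $(s,s')$ one needs
\begin{equation}
d\big(c(s),c(s')\big)\,D(1/2\|p) \;\ge\; L\,D(1/2\|p) - \tfrac{L}{2}\big[D(s/L\,\|\,p) + D\big((L-s')/L\,\|\,p\big)\big],
\end{equation}
so that, after multiplying by the Chernoff bound on $\sqrt{\mu^0_s\nu_{s'}}$, each term is at most $e^{-LD(1/2\|p)}$. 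The paper's resolution is an explicit nested (``sorted'') code $c(s)=1^{Lf(s/L)}0^{L(1-f(s/L))}$, which has the exact identity $d(c(s),c(s'))=L\,|f(s'/L)-f(s/L)|$, together with a carefully chosen \emph{nonlinear} $f$, namely $f(\alpha)=D(\alpha\|p)/(2D(1/2\|p))$ on $[p,1/2]$, extended by the antipodal relation $f(\alpha)=1-f(1-\alpha)$ and clamped to $\{0,1\}$ outside $[p,1-p]$. The displayed inequality then follows from the bound $f(\alpha')-f(\alpha)\ge 1-\big[D(\alpha\|p)+D(1-\alpha'\|p)\big]/(2D(1/2\|p))$, which is a short convexity argument. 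Note in particular that the linear choice $f(\alpha)=\alpha$ (the naive sorted code with $s$ ones) fails, because $d(c(Lp),c(L(1-p)))=L(1-2p)<L$, so the peak pair already falls short; the nonlinearity of $f$ is essential. Supplying this construction and the accompanying inequality is precisely the ``key technical step'' you flag as missing, and it is the core of the paper's proof.
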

    
    We achieve this result via a novel block-structured teaching strategy.  In contrast with \cite{onebit}, the teacher transmits bits based on a single block at a time, ignoring earlier blocks.  The teacher encodes the number of 1s received in the block by sending a sorted sequence (i.e., a string of $1$s followed by a string of $0$s), with the transition point chosen according to a carefully-chosen non-linear function of the number of 1s observed.  The student performs maximum-likelihood decoding, and we study the error probability via the Bhattacharyya coefficient.  The details are given in Sections \ref{sec:protocol} and \ref{sec:error}, and in Section \ref{sec:computaiton} we show that the overall computation time required by the teacher and student is linear in $n$.
    
    {\bf Comparison to upper bound (converse).} If $q=0$, then the optimal learning rate is given by the majority decoder, which achieves a learning rate of \cite{jog2020teaching,onebit}
    \begin{equation}
        D\Big(\frac{1}{2} \,\Big\| p\Big) = \frac{1}{2} \ln \left(\frac{1}{4p(1-p)}\right). \label{eq:1hop_bsc}
    \end{equation}
    Through data processing inequalities, we have $\mathcal{R}^*(p,q) \leq \mathcal{R}^*(p,0) = D\big(\frac{1}{2} \| p\big)$, and an analogous argument yields $\mathcal{R}^*(p,q) \leq D\big(\frac{1}{2} \| q\big)$. We therefore obtain the upper bound
    \begin{equation}
        \mathcal{R}^*(p,q) \leq D\Big(\frac{1}{2} \,\Big\| \max(p,q)\Big).
        \label{optimal}
    \end{equation}
    Theorem \ref{thm:main} shows that this bound is tight, disproving the above-mentioned conjecture of \cite{jog2020teaching}, and both confirming and strengthening the conjecture of \cite{onebit}.
    
    In the remainder of the section, we only consider the case $p=q$.  This is without loss of generality, since in the general case, one can simulate additional noise into the system to make the noise parameters both equal to $\max(p,q)$.
    
    \subsection{Teacher and Student Protocol} \label{sec:protocol}
    
    Before describing the protocol, we first need to introduce some statistical tools.
    
    \subsubsection{Statistical Tools and Student Decoding Rule}
    
    Let $A$ and $B$ be discrete random variables defined on some finite alphabet $\Omega$. The Bhattacharyya coefficient is a real number in $[0,1]$ given by
    \begin{equation}
        \rho(A,B) = \sum_{x \in \Omega} \sqrt{\p(A=x)\p(B=x)}.
        \label{bc-def}
    \end{equation}
    Overloading notation, we will sometimes also denote this quantity by $\rho(P_A,P_B)$, where $P_A$ and $P_B$ are the associated probability mass functions.
    
    The Bhattacharyya coefficient measures the `closeness' between the distributions of $A$ and $B$. Values close to 0 indicate easily separated distributions, while values close to 1 indicate very similar distributions. In particular, $\rho(A,B)=0$ if and only if $A$ and $B$ have disjoint supports, and $\rho(A,B)=1$ if and only if they are identically distributed.
    
    We will make use of some standard properties of the Bhattacharyya coefficient:
    \begin{itemize}
        \item Suppose that we have a random variable $X$ which is known to follow one of the two known distributions, $A$ or $B$. If we draw one instance of $X$ and use it to decide which distribution $X$ came from, then there exists a strategy to achieve an error probability of at most $\rho(A,B)$.
        To achieve this, we simply use the maximum likelihood test: Given $X=x$, we decide that $A$ is the true distribution if $\p(A=x) > \p(B=x)$, and decide that $B$ is the true distribution otherwise.
        The probability of selecting distribution $A$ when $B$ is indeed the true distribution is given by
        \begin{align}
            &\sum_{x \in \Omega} \p(B=x) \mathbbm{1}_{\p(A=x)>\p(B=x)} \nonumber \\ &\qquad
            \leq \sum_{x \in \Omega} \sqrt{\p(A=x) \p(B=x)}.
        \end{align}
        The other error type is upper bounded similarly, and combining the two error types yields the desired bound:
        \begin{equation}
            \p({\rm error}) \le \rho(A,B). \label{error}
        \end{equation}
        
        \item If $A_1, A_2$ are independent, and similarly for $B_1, B_2$, then
        \begin{equation}
            \rho((A_1, A_2), (B_1, B_2)) = \rho(A_1, B_1) \cdot \rho(A_2, B_2).
            \label{mult}
        \end{equation}
        This follows by a direct expansion of (\ref{bc-def}).
        \item Let $\vec{A} = (A_1, A_2, \ldots, A_m)$ be $m$ i.i.d. copies of $A$ and $\vec{B} = {B_1, B_2,\ldots, B_m}$ be $m$ i.i.d. copies of $B$. Then
        \begin{equation}
            \rho(\vec{A}, \vec{B}) = \rho(A,B)^m.
            \label{power}
        \end{equation}
        This is a direct consequence of (\ref{mult}).
    \end{itemize}
    In accordance with \eqref{error}, we assume throughout the paper that the student adopts the optimal maximum-likelihood decoding rule, where the two underlying distributions are those of $\Theta = 1$ vs.~$\Theta = 0$ given $Z_1,\dotsc,Z_n$.
    
    \subsubsection{Block-Structured Design and Teaching Strategy}
    
    The transmission protocol of length $n$ is broken down into $n/k$ blocks, each of length $k$.  For each $i=1,2,3,\ldots, \frac{n}{k}-1$, we let $(\hat X_{ik+1}, \ldots, \hat X_{(i+1)k})$ be a (deterministic) function of $(Y_{(i-1)k+1}, \dotsc, Y_{ik})$. The values of $\hat{X}_1, \ldots, \hat{X}_k$ are ignored by the student. The function used to generate $\vec{\hat{X}} \in \{0,1\}^{k}$ from $\vec{Y} \in \{0,1\}^{k}$ is given in Section \ref{within-block}.
    
	Under this design, the $i$-th block of $\hat{X}$ only depends on the $(i-1)$-th block of $Y$; see Figure \ref{block-protocol} for an illustration.    To simplify notation, we define
    \begin{equation}
        W_i = (Z_{ik+1},, \ldots, Z_{(i+1)k}) \label{eq:W}
    \end{equation}
    to be the $i$-th block received by student. 
    
    \begin{figure}
        \centering
        \scalebox{0.85}{
            \begin{tikzpicture}
                [line cap=round,line join=round,>=triangle 45,x=1.0cm,y=1.0cm, scale=0.6]
                \clip(-1,-1) rectangle (16, 7);
                \draw [line width=1.pt] (0.,0.)-- (12.,0.);
                \draw [line width=1.pt] (0.,1.)-- (12.,1.);
                \draw [line width=1.pt] (0.,3.)-- (12.,3.);
                \draw [line width=1.pt] (12.,1.)-- (12.,0.);
                \draw [line width=1.pt] (0.,1.)-- (0.,0.);
                \draw [line width=1.pt] (4.,1.)-- (4.,0.);
                \draw [line width=1.pt] (8.,1.)-- (8.,0.);
                \draw [line width=1.pt] (0.,4.)-- (12.,4.);
                \draw [line width=1.pt] (12.,3.)-- (12.,4.);
                \draw [line width=1.pt] (8.,3.)-- (8.,4.);
                \draw [line width=1.pt] (4.,3.)-- (4.,4.);
                \draw [line width=1.pt] (0.,3.)-- (0.,4.);
                \draw [->,line width=1.pt] (0,5) -- (13,5);
                \draw (14,5) node{time};
                \draw (0, 4.5) node{1};
                \draw (3.8, 4.5) node{$k$};
                \draw (7.7, 4.5) node{$2k$};
                \draw (11.7, 4.5) node{$3k$};
                \draw (0, -0.5) node{1};
                \draw (3.8, -0.5) node{$k$};
                \draw (7.7, -0.5) node{$2k$};
                \draw (11.7, -0.5) node{$3k$};
                \draw [->,line width=1.pt] (2,3) -- (6,1);
                \draw [->,line width=1.pt] (6,3) -- (10,1);
                \draw [fill=red] (0,3) rectangle (4,4);
                \draw [fill=red] (4,0) rectangle (8,1);
                \draw [fill=blue] (4,3) rectangle (8,4);
                \draw [fill=blue] (8,0) rectangle (12,1);
                \draw (14,3.5) node{teacher};
                \draw (14,0.5) node{student};
            \end{tikzpicture}
        }
        \caption{A diagrammatic representation of the block protocol}
        \label{block-protocol}
    \end{figure}
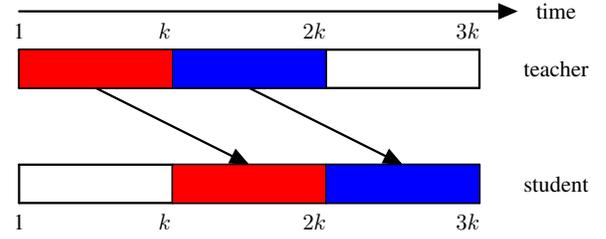

    The memoryless properties of the channels imply that the $W_i$'s are conditionally independent given $\Theta$. Since the function that generates $\vec{\hat{X}}$ from $\vec{Y}$ is the same every time, their distributions are also identical; we denote this common distribution by $P_W$.
    
    The student needs to distinguish between $n/k-1$ i.i.d. copies of $P_{W|\Theta=1}$ and $n/k-1$ i.i.d. copies of $P_{W|\Theta=0}$. Using (\ref{error}) and (\ref{power}), we have
    \begin{equation}
        \p(\hat\Theta_n \neq \Theta) \leq \rho(P_{W|\Theta=1},P_{W|\Theta=0})^{n/k-1}.
        \label{rate-k}
    \end{equation}
    By setting $k$ to be a fixed constant as $n$ increases, it follows that the learning rate is lower bounded by
    \begin{align}
        \mathcal{R} &= \limsup_{n\rightarrow \infty} \left\{ -\frac1n \ln \mathbb{P}(\hat{\Theta}_n \neq \Theta)\right\} \\
        &\geq -\frac1k \ln \rho(P_{W|\Theta=1},P_{W|\Theta=0}),
        \label{rate-2}
    \end{align}
    where $W$ implicitly depends on $k$.   In fact, while constant $k$ suffices for our purposes, a similar argument holds when $k$ increases but satisfies $k = o(n)$, in which case the operation $\limsup_{k\rightarrow \infty}$ should also be included on the right-hand side of \eqref{rate-2}.
    
    \begin{remark}
        This block structure bears high-level similarity to block-Markov coding in the analysis of relay channels \cite[Sec.~16.4]{el2011network}.  However, the details here are very different from existing analyses that use joint typicality and related notions.  The error exponent associated such analyses would be low due to the effective block length of $k$, whereas we maintain a high error exponent via joint decoding over the entire length-$n$ received sequence.
    \end{remark}
    
    \subsubsection{Protocol within Each Block}
    \label{within-block}
    In each block, the teacher receives $k$ noisy bits (e.g., $Y_1,\dotsc,Y_k$ in the first block). Let $\alpha$ be the fraction of 1 among these bits (i.e. the number of 1s divided by $k$).
    
    The teacher sends $k\cdot f(\alpha)$ bits of 1 followed by $k(1-f(\alpha))$ bits of $0$ (with a delay of one block), where $f: [0,1] \rightarrow [0,1]$ is defined as
    \begin{equation}
        f(\alpha) = 
        \begin{cases}
            0 & 0\leq \alpha < p\\
            \frac{D(\alpha\|p)}{2D(1/2\|p)} & p\leq \alpha \le 1/2\\
            1-f(1-\alpha) & 1/2 < \alpha < 1-p\\
            1 & 1-p \leq \alpha \leq 1.
        \end{cases}
    \end{equation}
    A sample plot is given in Figure \ref{plot} with $p=0.2$.  As exemplified in the figure, $f$ is non-decreasing, and
    \begin{gather}
        f(\alpha) = 1-f(1-\alpha),
        \label{symmetric} \\
        f(\alpha) \leq \frac{D(\alpha\|p)}{2D(1/2\|p)},
        \label{f-upper-bound}
    \end{gather}
    where \eqref{f-upper-bound} is trivial for $\alpha \le \frac{1}{2}$, and follows easily from the convexity of $D(\cdot \| p)$ for $\alpha > \frac{1}{2}$.  Specifically, this convexity implies that the derivative of $D(\cdot \| p)$ increases for $\alpha > \frac{1}{2}$, whereas the chain rule applied to \eqref{symmetric} implies that the derivative of $f(\alpha)$ decreases for $\alpha > \frac{1}{2}$.   Since \eqref{f-upper-bound} holds with equality for $\alpha \in \big[p,\frac{1}{2}\big]$ by definition, this behavior of the derivatives implies that \eqref{f-upper-bound} also holds for $\alpha > \frac{1}{2}$.
    
    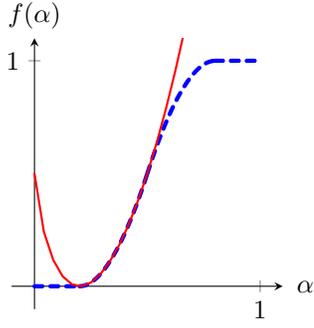
\begin{figure}
        \begin{center}
            \begin{tikzpicture}[line cap=round,line join=round,>=triangle 45,x=3.0cm,y=3.0cm]
                \draw (1.3,0.1) node{$\alpha$};
                \draw (0.1,1.3) node{$f(\alpha)$};

                \begin{axis}[
                    x=3.0cm,y=3.0cm,
                    axis lines=middle,
                    xmin=-0.1,
                    xmax=1.1,
                    ymin=-0.1,
                    ymax=1.1,
                    xtick={0,1},
                    ytick={0,1},]
                    \clip(-0.1,-0.1) rectangle (1.1,1.1);
                    \addplot[ultra thick,domain=0:0.2, color=blue, dashed] {0};
                    \addplot[ultra thick,domain=0.2:0.5, color=blue, dashed] {(x*log10(x/0.2)+(1-x)*log10((1-x)/0.8))/0.1938};
                    \addplot[ultra thick,domain=0.5:0.8, color=blue,dashed] {1-((1-x)*log10((1-x)/0.2)+x*log10(x/0.8))/0.1938};
                    \addplot[ultra thick,domain=0.8:1, color=blue,dashed] {1};
                    \addplot[thick, domain=0:1, color=red] {(x*log10(x/0.2)+(1-x)*log10((1-x)/0.8))/0.1938};
                \end{axis}
            \end{tikzpicture}
            \caption{The function $f$ used in the protocol, plotted for $p=0.2$.  The dashed line represents $f(\alpha)$, while the solid line represents $\frac{D(\alpha\|p)}{2D(1/2\|p)}$.}
            \label{plot}
        \end{center}
    \end{figure}
    
    In general $k\cdot f(\alpha)$ may not be an integer. When this happens, we can use $\lfloor k\cdot f(\alpha) \rfloor$ bits of $1$ followed by $k-\lfloor k\cdot f(\alpha) \rfloor$ bits of $0$. The rounding does not affect the error analysis, so to simplify notation, we focus on the integer-valued case.  See Footnote \ref{foot:rounding} on Page \pageref{foot:rounding} for an additional remark on this issue.
    
    Before proceeding with the student strategy and error analysis, we intuitively explain why sending a sorted sequence (i.e. with all the 1s sent before the $0$s) can make decoding easier for the student.  Suppose the student receives the string 1111000000\underline{1}0. If the student knows in advance that the string sent by the teacher is sorted, they can be confident that the underlined bit is flipped. However, if this were to be an unsorted string, such corrections could not be made.
    
    The reasoning behind our choice of $f$ will become apparent in the error analysis in the following subsection, but intuitively, it serves as a carefully-chosen middle ground between the simpler choices $f(\alpha) = \alpha$ (bearing some resemblance to simple forwarding) and $f(\alpha) = \boldsymbol{1}\{ \alpha > \frac{1}{2} \}$ (bearing some resemblance to cumulative teaching).
    
    \subsection{Error Analysis} \label{sec:error}
    
    Continuing the study of a single block $W$ defined according to \eqref{eq:W}, observe that we have the Markov chain
    \begin{equation}
        \Theta \rightarrow \alpha \rightarrow W,
    \end{equation}
    and in accordance with \eqref{rate-2}, we would like the distributions $P_{W|\Theta=1}$ and $P_{W|\Theta=0}$ to be as `far apart' as possible.
    %
    We break the error analysis into two parts, analyzing the relations $\Theta \rightarrow \alpha$ and $\alpha \rightarrow W$ separately.
    
    \subsubsection{Relating $\Theta$ and $\alpha$}
    Define $W_{\alpha}$ to follow the conditional distribution of $W$ given $\alpha$.
    The distribution of $W_{\alpha}$ is simple; recalling that we are focusing on the case $p = q$, we have the following:
    \begin{itemize}
        \item The bits of $W_{\alpha}$ are independently distributed;
        \item If $i \le k\cdot f(\alpha)$, then the $i$-th bit is 1 with probability $1-p$ and $0$ otherwise;
        \item If $i > k\cdot f(\alpha)$, then the $i$-th bit is 1 with probability $p$ and $0$ otherwise.
    \end{itemize}
    
    We proceed by upper bounding $\rho(W_{\alpha_1}, W_{\alpha_2})$, starting with the case that $\alpha_1 \leq \alpha_2$.
    
    The bits up to index $k\cdot f(\alpha_1)$, and the bits after index $k\cdot f(\alpha_2)$, are all identically distributed between $W_{\alpha_1}$ and $W_{\alpha_2}$, and thus do not provide any distinguishing power. We are left with $k\cdot f(\alpha_2)-k\cdot f(\alpha_1)$ bits, which are i.i.d.~according to either Bernoulli($p$) or Bernoulli($1-p$).
    
    The \bh coefficient between Bernoulli($p$) and Bernoulli($1-p$) can easily be computed to be $2\sqrt{p(1-p)} = e^{-D(1/2\|p)}$,
    and using (\ref{power}), we obtain\footnote{\label{foot:rounding}If we were to explicitly incorporate rounding as discussed following \eqref{f-upper-bound}, then this equation would be replaced by the slightly looser bound $\rho(W_{\alpha_1}, W_{\alpha_2}) \le e^{-D(1/2\|p)\cdot (k f(\alpha_2) - kf(\alpha_1) - 1)}$.  The subtraction of one in the exponent only amounts to multiplying the entire expression by a constant, which does not impact the resulting exponential decay rate.}
    \begin{equation}
        \rho(W_{\alpha_1}, W_{\alpha_2}) = e^{-kD(1/2\|p)\cdot (f(\alpha_2) - f(\alpha_1))}
        \label{bc-alpha-tmp}
    \end{equation}
    To simplify the exponent, we use (\ref{symmetric}) and (\ref{f-upper-bound}) to obtain 
    \begin{gather}
        f(\alpha_1) \leq \frac{D(\alpha_1\|p)}{2D(1/2\|p)}, \\
        f(\alpha_2) = 1-f(1-\alpha_2) \geq 1-\frac{D(1-\alpha_2\|p)}{2D(1/2\|p)},
    \end{gather}
    and hence,
    \begin{equation}
        f(\alpha_2)-f(\alpha_1) \geq 1 -\frac{D(\alpha_2\|p)}{2D(1/2\|p)} - \frac{D(1-\alpha_1\|p)}{2D(1/2\|p)}.
    \end{equation}
    We can therefore weaken (\ref{bc-alpha-tmp}) to
    \begin{equation}
        \rho(W_{\alpha_1}, W_{\alpha_2}) \leq e^{-k \cdot (D(1/2\|p) - D(\alpha_1\|p)/2 - D(1-\alpha_2\|p)/2)}.
        \label{bc-alpha}
    \end{equation}
    Although the assumption $\alpha_1 \leq \alpha_2$ was used in the derivation of (\ref{bc-alpha-tmp}), a trivial argument shows that when $\alpha_1 > \alpha_2$, the right-hand side of (\ref{bc-alpha-tmp}) is at least one (since $f$ is non-decreasing). Since the Bhattacharyya coefficient never exceeds one, it follows that (\ref{bc-alpha}) remains true in the case that $\alpha_1 > \alpha_2$.
    
    \subsubsection{Relating $\alpha$ and $W$}
    
    The next step is to decompose the distribution of $W$ into the simpler distributions $W_\alpha$. To do so, we use the following definition.
    \begin{dfn} \label{def:mixture}
        Let $A_1, \ldots, A_m$ be probability distributions defined on a common finite probability space $\Omega$ and $p_1, \ldots, p_m$ be real numbers in $[0,1]$ that sum to one. The mixture distribution $A$ described by $[p_1,A_1; \ldots; p_k,A_k]$ is defined as $\p(A=x) = \sum_{i=1}^k p_i \cdot \p(A_i=x)$.
    \end{dfn}
    
    We proceed with a simple technical lemma on the \bh coefficient on mixtures.
    
    \begin{lemma} \label{lem:mixture}
        Suppose that $A$ follows a mixture distribution described by $[p_1,A_1; \ldots; p_k,A_k]$. Then for any distribution $B$, we have
        \begin{equation}
            \rho(A,B) \leq \sum_{i=1}^k \sqrt{p_i} \rho(A_i, B).
            \label{mixture}
        \end{equation}
    \end{lemma}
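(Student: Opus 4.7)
The plan is to expand both sides directly from the definition of the Bhattacharyya coefficient and reduce the claim to a pointwise inequality about sums under a square root. Writing $P_A(x) = \sum_{i=1}^{k} p_i P_{A_i}(x)$ by the mixture definition, we get
\begin{equation}
\rho(A,B) = \sum_{x \in \Omega} \sqrt{P_B(x) \sum_{i=1}^{k} p_i P_{A_i}(x)} = \sum_{x \in \Omega} \sqrt{\sum_{i=1}^{k} p_i P_{A_i}(x) P_B(x)}.
\end{equation}
The goal is then to bound each summand by $\sum_{i=1}^{k} \sqrt{p_i P_{A_i}(x) P_B(x)}$, since summing over $x$ and swapping the order of summation would immediately yield the desired $\sum_i \sqrt{p_i} \rho(A_i,B)$.

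The key step is the elementary inequality $\sqrt{\sum_i a_i} \leq \sum_i \sqrt{a_i}$ for nonnegative reals $a_i$, which follows from squaring both sides and noting that the cross-terms on the right are nonnegative. Applying this with $a_i = p_i P_{A_i}(x) P_B(x) \geq 0$ gives the pointwise bound, and the lemma follows by interchanging the two finite sums.

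There is no real obstacle here; the argument is entirely algebraic and uses only the definitions of mixture distributions and the Bhattacharyya coefficient together with subadditivity of the square root on the nonnegative reals. The only very minor care point is to remember that the inequality $\sqrt{\sum_i a_i} \leq \sum_i \sqrt{a_i}$ goes in the direction needed (it is not an equality, which is why \eqref{mixture} is a bound rather than an identity), and that $P_B(x)$ is pulled inside the square root as a common factor before invoking subadditivity.
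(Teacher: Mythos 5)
Your proof is correct and follows essentially the same route as the paper: expand the Bhattacharyya coefficient, substitute the mixture form for $P_A$, apply subadditivity of the square root pointwise, and swap the order of summation. The paper states the two-term inequality $\sqrt{a+b}\le\sqrt{a}+\sqrt{b}$ but applies it in exactly the multi-term form you use, so there is no substantive difference.
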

    \begin{proof}
        Using the inequality $\sqrt{a+b} \leq \sqrt{a}+\sqrt{b}$, we have
        \begin{align}
            \rho(A,B) &= \sum_{x \in \Omega} \sqrt{\p(A=x)\p(B=x)} \\
            &= \sum_{x\in \Omega} \sqrt{\sum_{i=1}^k p_i\p(A_i=x) \p(B=x)} \\
            &\leq \sum_{x\in \Omega} \sum_{i=1}^k \sqrt{p_i\p(A_i=x) \p(B=x)}.
        \end{align}
	   Simple re-arranging gives the desired result.
    \end{proof}
    
    Using the symmetry of the \bh coefficient and applying Lemma \ref{lem:mixture} twice, we obtain
    \begin{multline}
        \rho(P_{W|\Theta=1},P_{W|\Theta=0})
        \leq \sum_{i=0}^k\sum_{j=0}^k \sqrt{\p(\alpha=i/k|\Theta=1)}  \\ \times
        \sqrt{\p(\alpha=j/k|\Theta=0)} \cdot \rho(W_{i/k}, W_{j/k}).
        \label{decompose}
    \end{multline}
    Moreover, the conditional distribution of $\alpha$ given $\Theta$ is simple:
    \begin{align}
        (\alpha|\Theta=1) &\sim \frac1k\cdot \text{Binomial}(k, 1-p), \\
        (\alpha|\Theta=0) &\sim \frac1k\cdot \text{Binomial}(k, p).
    \end{align}
    Hence, applying the Chernoff bound for the binomial distribution (e.g., see \cite[Sec.~2.2]{Bou13}), we obtain
    \begin{equation}
        \p(\alpha=\alpha_0|\Theta=1) \leq e^{-kD(\alpha_0\|1-p)} = e^{-kD(1-\alpha_0\|p)},
    \end{equation}
    and similarly, 
    \begin{equation}
        \p(\alpha=\alpha_0|\Theta=0) \leq e^{-kD(\alpha_0\|p)}.
    \end{equation}
    
    Combining these bounds with (\ref{bc-alpha}) and (\ref{decompose}), we find that the $D(i/k\|p)/2$ and $D(1-j/k\|p)/2$ terms both cancel to zero, and we are left with
    \begin{align}
        \rho(P_{W|\Theta=1},P_{W|\Theta=0}) &\le \sum_{i=0}^k \sum_{j=0}^k e^{-kD(1/2\|p)} \\
        &= (k+1)^2 e^{-kD(1/2\|p)}.
        \label{bc-final}
    \end{align}
    Combining (\ref{rate-2}) with (\ref{bc-final}) gives
    \begin{equation}
        \mathcal{R} \geq D(1/2\|p) - \mathcal{O}\left(\frac{\log k}{k}\right),
    \end{equation}
    and hence, by setting $k$ large enough, we can obtain any learning rate arbitrarily close to $D(1/2\|p)$.  This completes the proof of Theorem \ref{thm:main}.
    
    \subsection{Computational complexity} \label{sec:computaiton}
    
    In this section, we argue that it is possible to execute the described protocol with an overall computation time of $\mathcal{O}(n)$ (i.e., no higher than that of simply reading the received bits).  This is immediate for the teacher, so we focus our attention on the student.
    
    To implement the maximum likelihood decoder, it suffices to compute the associated log likelihood ratio (LLR) $\log \frac{\p(\vec{Z} =\vec{z} | \Theta = 1)}{\p(\vec{Z} =\vec{z} | \Theta = 0)}$ given the $n$ received bits $\vec{z}$.  In addition, since the $n/k-1$ blocks are independent by construction and memorylessness, the overall LLR is the sum of per-block LLRs. Hence, we only need to show that each of these can be computed in time $\mathcal{O}(k)$.
    
    The only minor difficulty in computing the LLR for a single block is that we need to account for all possible $k+1$ choices of the latent/hidden variable $\alpha \in \big\{0,\frac{1}{k},\dotsc,1\big\}$: Letting $W$ denote a generic length-$k$ block according to \eqref{eq:W}, we have
    \begin{equation}
        \p(W | \Theta = 1) = \sum_{\alpha} \p(\alpha | \Theta = 1) \p(W|\alpha), \label{eq:sum}
    \end{equation}
    and similarly for $\p(W | \Theta = 0)$.  In addition, since the channel is memoryless, we have
    \begin{equation}
        \p(W|\alpha) = \prod_{i=1}^k \p(W(i)|\alpha), \label{eq:W_prod}
    \end{equation}
    where $W(i)$ is the $i$-th bit comprising $W$.  Naively, we could evaluate \eqref{eq:sum} and \eqref{eq:W_prod} directly, but this would give a per-block complexity of $\mathcal{O}(k^2)$.
    
    To improve this, we recall that for two values $\alpha_1,\alpha_2 \in \big\{0,\frac{1}{k},\dotsc,1\big\}$, the associated values of $\p(W(i)|\alpha)$ only differ for values of $i$ in between $k\cdot f(\alpha_1)$ and $k\cdot f(\alpha_2)$.  Therefore, we can initially compute $\p(W|\alpha)$ for $\alpha = 0$, and then use it to compute the value for $\alpha = \frac{1}{k}$ by only looking at the first $k\cdot f\big(\frac{1}{k}\big)$ bits of $W$, and so on for $\alpha \in \big\{\frac{2}{k},\frac{3}{k},\dotsc,1\big\}$; at each step, we only need to look at the bits affected by incrementing $\alpha$.  In this manner, it only takes $\mathcal{O}(k)$ time to compute {\em all} $k+1$ values of $\p(W|\alpha)$.  In addition, $\p(\alpha | \Theta = 1)$ in \eqref{eq:sum} follows a scaled binomial distribution, whose probability mass function can be pre-computed in $\mathcal{O}(k)$ time (assuming constant-time arithmetic operations).  Hence, we obtain an overall per-block computation time of $\mathcal{O}(k)$.
    
    In addition to the low computational overhead, another advantage of our strategy is that it is \textit{anytime}.  Specifically, it can be run without knowledge of $n$, and we can stop the algorithm any time and ask the student for the estimate $\hat{\Theta}_n$.  Letting $\epsilon_{k,n}$ be the resulting error probability with block size $k$ and total time $n$, it still holds in this scenario that  $-\frac{1}{n} \ln \epsilon_{k,n} \rightarrow D(1/2||p)$, as long as $k \rightarrow \infty$ and $n/k \rightarrow \infty$.
    
    \section{General Binary-Input Discrete Memoryless Channels}  \label{sec:dmc}
    
    In this section, we consider the same model as Section \ref{model-dfn}, except that we replace BSC($p$) and BSC($q$) by general discrete memoryless channels (DMCs) $P$ and $Q$ (known to the student and teacher), both of which have binary inputs $\{0,1\}$. We denote the transition probabilities by $P(y|\theta)$ and $Q(z|\hat{x})$.  The case that $Q$ has non-binary input is discussed in Section \ref{sec:conclusion}.
    
    We first overview the 1-hop case in Section \ref{sec:1hop}, following the classical work of Shannon, Gallager, and Berlekamp \cite{shannon1967}.  We then state an achievable learning rate in Section \ref{sec:dmc_ach}, and give the protocol and analysis in Section \ref{sec:dmc_protocol}, which are essentially more technical variants of the binary symmetric case.  We provide an upper bound on the learning rate (i.e., a  converse) in Section \ref{sec:dmc_conv}, and use it to deduce the optimal learning rate in certain special cases.  While the tightness of our achievable learning rate remains open in general, we additionally show in Section \ref{sec:block_conv} that, at least under certain technical assumptions, it cannot be improved for block-structured protocols.
    
    \subsection{Existing Results on the 1-Hop Case} \label{sec:1hop}
    
    We first consider a 1-hop scenario in which a single agent makes repeated observations of the unknown variable $\Theta$ through repeated uses of a binary-input DMC $P$. This is a well-studied problem, and we will summarize some of the results from \cite{shannon1967}.
    
    We begin with a generalization of the \bh coefficient. For a real number $s \in (0,1)$ and two random variables $A, B$ defined on the same finite alphabet $\Omega$, let
    
    \begin{equation}
        \rho(A, B, s) = \sum_{x \in \Omega} \p(A=x)^{1-s} \p(B=x)^s \in [0,1].
    \end{equation}
    where the upper bound of one follows by applying Jensen's inequality to $\mathbb{E}_A\big[ \big( \frac{P_B(B)}{P_A(A)} \big)^s \big]$.  This function can be extended to $s=0$ and $s=1$ by continuity; for instance, $\rho(A, B, 0) = \sum_{x \in \Omega} \p(A=x) \mathbbm{1}_{\p(B=x)> 0}$.  We also note that setting $s = \frac{1}{2}$ recovers the \bh coefficient.  We write $\rho(A,B,s)$ and $\rho(P_A,P_B,s)$ interchangeably.
    Similarly to \eqref{mult}--\eqref{power}, we have the following properties for independent pairs and length-$m$ i.i.d.~vectors:
    \begin{gather}
        \rho((A_1, A_2), (B_1, B_2),s) = \rho(A_1, B_1,s) \cdot \rho(A_2, B_2,s),
        \label{mult2} \\
        \rho(\vec{A}, \vec{B},s) = \rho(A,B,s)^m.
        \label{power2}
    \end{gather}
    
    Given the DMC $P$ with inputs $0$ and $1$, we define
    \begin{equation}
        \rho_P(s) = \rho(P(\cdot|0), P(\cdot|1), s),
    \end{equation}
    and denote its logarithm by\footnote{See Figure \ref{fig:mu_examples} on Page \pageref{fig:mu_examples} for some examples of how $\mu_P(s)$ varies as a function of $s$.}
    \begin{equation}
        \mu_P(s) = \ln \rho_P(s) \le 0,
    \end{equation}
    where $P(\cdot | 0)$ and $P(\cdot |1)$ are treated as probability distributions on the output alphabet.  The first and second derivatives of $\mu_P(s)$ for $s \in (0,1)$ are denoted by $\mu'_P(s)$ and $\mu''_P(s)$, and once again, the values for the endpoints $s \in \{0,1\}$ are defined with respect to the appropriate limits.  
    
    Suppose that the agent makes a single observation of $\Theta \in \{0,1\}$ through $P$, and uses it to estimate $\Theta$.  Using the identity $\mathbbm{1}_{\p(B=x)>\p(A=x)} \leq \big(\frac{\p(B=x)}{\p(A=x)}\big)^s$, we find that for any $s \ge 0$, $\rho_P(s)$ is an upper bound to the probability of a maximum-likelihood decision rule selecting $B$ when $A$ is true.  An analogous property holds for the reverse error event with  $1-s$ in place of $s$, and it follows that $\rho_P(s)$ is an upper bound to the overall error probability for all $s \in [0,1]$.
    
    In view of the property in \eqref{power2}, if the agent instead makes $n$ independent observations, the upper bound becomes $\rho_{P^n}(s) = \rho_P(s)^n$, implying a learning rate of at least
    \begin{equation}
        -\frac1n \ln \rho_P(s)^n = -\mu_P(s).
        \label{mu-learning-rate}
    \end{equation}
    The following well-known result reveals that this learning rate is optimal upon optimizing over $s \in [0,1]$.
    
    \begin{lemma} \label{lem:berlekamp}
        {\em \cite[Corollary of Thm.~5]{shannon1967}}\footnote{This is a weakened version of the statement in \cite{shannon1967}, in which we apply $\max\{s,1-s\} \le 1$ in the coefficient to the $\sqrt{n}$ term.  In addition, we specialize to the case that every transmitted symbol is identical, in order to match our setup. } Let $s^* \in [0,1]$ be chosen to minimize $\mu_P(s)$. Then any decoding strategy of the agent after $n$ uses of the channel must have an error probability of at least $\frac{1}{8} \exp\big(n\mu_P(s^*) - \sqrt{2n\mu_P''(s^*)}\big)$ when $\Theta$ is equiprobable on $\{0,1\}$.
    \end{lemma}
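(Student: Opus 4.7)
My plan is to adapt the classical tilted-distribution (Chernoff) argument underlying Theorem~5 of \cite{shannon1967}. Write $P_0^n$ and $P_1^n$ for the $n$-fold product distributions corresponding to $\Theta=0$ and $\Theta=1$, and let $\mathcal{D} \subseteq \Omega^n$ be the decoder's decision region for outputting $0$. Since $\Theta$ is equiprobable,
$$ P_e \;=\; \tfrac{1}{2}\bigl(P_0^n(\mathcal{D}^c) + P_1^n(\mathcal{D})\bigr), $$
so it suffices to lower bound one of these two terms by $\tfrac{1}{4}\exp\bigl(n\mu_P(s^*) - \sqrt{2n\mu_P''(s^*)}\bigr)$.

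The central device is the tilted distribution
$$ Q_{s^*}^n(y^n) \;=\; \frac{P_0^n(y^n)^{1-s^*}\, P_1^n(y^n)^{s^*}}{\rho_P(s^*)^n}. $$
A standard moment-generating-function computation shows that under $Q_{s^*}^n$ the log-likelihood ratio $\Lambda_n = \ln\frac{P_1^n(y^n)}{P_0^n(y^n)}$ has mean $n\mu_P'(s^*)$ and variance $n\mu_P''(s^*)$. For $s^* \in (0,1)$, the optimality of $s^*$ forces $\mu_P'(s^*)=0$, so $\Lambda_n$ is centered with variance $n\mu_P''(s^*)$ under the tilt.

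Next I would apply Chebyshev's inequality: the event $E = \bigl\{|\Lambda_n| \le \sqrt{2n\mu_P''(s^*)}\bigr\}$ satisfies $Q_{s^*}^n(E) \ge \tfrac{1}{2}$, so at least one of $E \cap \mathcal{D}$ and $E \cap \mathcal{D}^c$ has $Q_{s^*}^n$-mass at least $\tfrac{1}{4}$; suppose it is the latter. Inverting the tilt gives the pointwise identity $P_0^n(y^n) = Q_{s^*}^n(y^n)\,\rho_P(s^*)^n\, e^{-s^*\Lambda_n}$, and on $E$ we have $e^{-s^*\Lambda_n} \ge e^{-s^*\sqrt{2n\mu_P''(s^*)}} \ge e^{-\sqrt{2n\mu_P''(s^*)}}$ using $s^* \le 1$. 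Summing over $E \cap \mathcal{D}^c$ then yields
$$ P_0^n(\mathcal{D}^c) \;\ge\; \tfrac{1}{4}\, \exp\!\bigl(n\mu_P(s^*) - \sqrt{2n\mu_P''(s^*)}\bigr), $$
and $P_e \ge \tfrac{1}{2}P_0^n(\mathcal{D}^c)$ delivers the claim. If the heavier piece of $E$ instead lies in $\mathcal{D}$, the dual identity $P_1^n(y^n) = Q_{s^*}^n(y^n)\,\rho_P(s^*)^n\, e^{(1-s^*)\Lambda_n}$ together with $1-s^* \le 1$ gives the same bound on $P_1^n(\mathcal{D})$.

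The main obstacle I anticipate is the boundary regime $s^* \in \{0,1\}$, where $\mu_P'(s^*)$ need not vanish and the Chebyshev step must be modified. In those cases, however, either $\mu_P(s^*)=0$ (so that the asserted bound is essentially trivial) or the two conditional distributions have disjoint supports, and a short separate argument handles the remaining degeneracies. Some care is also needed to verify that $\mu_P''(s^*)$ is finite and positive on the non-degenerate range, and to justify continuity of $\mu_P$ and $\rho_P$ at the endpoints of $[0,1]$ before invoking the convex-analytic facts used above.
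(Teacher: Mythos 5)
The paper does not prove Lemma~\ref{lem:berlekamp}; it is imported verbatim (as a weakened specialization) from the cited reference \cite[Thm.~5 and its Corollary]{shannon1967}, so there is no in-paper proof to compare against. Your reconstruction follows the standard Shannon--Gallager--Berlekamp tilting argument, which is exactly the argument underlying the cited theorem, and the core chain of reasoning for $s^*\in(0,1)$ is correct: the tilt normalization gives $\rho_P(s^*)^n=e^{n\mu_P(s^*)}$, the mean and variance of $\Lambda_n$ under the tilt are $n\mu_P'(s^*)$ and $n\mu_P''(s^*)$, interior optimality forces $\mu_P'(s^*)=0$, Chebyshev yields mass at least $\tfrac12$ on $E$, pigeonhole puts mass at least $\tfrac14$ into one of $E\cap\mathcal{D}$ or $E\cap\mathcal{D}^c$, the tilt inversion is exact, and the bound $\max\{s^*,1-s^*\}\le 1$ reproduces precisely the weakening mentioned in the footnote. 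The constant $\tfrac12\cdot\tfrac14=\tfrac18$ comes out correctly.

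The only soft spot is the treatment of the endpoints. Your claim that $s^*\in\{0,1\}$ implies ``either $\mu_P(s^*)=0$ or the supports are disjoint'' is not quite exhaustive: for instance with $s^*=0$ one has $\mu_P(0)=\ln\sum_y P(y|0)\mathbbm{1}\{P(y|1)>0\}$, which can be strictly negative without the supports being disjoint (it only requires the support of $P(\cdot|1)$ to fail to cover that of $P(\cdot|0)$). Likewise the case $\mu_P''(s^*)=0$ needs to be dispatched before invoking Chebyshev. These degeneracies are genuinely handled in \cite{shannon1967} by a one-sided version of the argument (replace the two-sided Chebyshev event with a one-sided one and drop the vanishing-derivative step), and they do not affect the interior case, so your outline is essentially complete; just be aware that the dichotomy you stated is not the right classification of the boundary situations.
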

    
    Since the $\sqrt{2n\mu_P''(s^*)}$ term is sub-linear in $n$, this immediately gives the following corollary.
    \begin{cor}
        The optimal learning rate for the 1-hop system is given by $-\min_{s \in [0,1]} \mu_P(s)$.
        \label{1-hop-error}
    \end{cor}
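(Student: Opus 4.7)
The plan is to obtain the corollary as an immediate consequence of matching achievability and converse bounds, both already essentially in hand in the preceding discussion. I would first establish the achievability direction, and then invoke Lemma~\ref{lem:berlekamp} for the converse, taking care to show that the sub-linear correction term there is asymptotically negligible.

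For the achievability direction, I would simply reiterate the argument given just before Lemma~\ref{lem:berlekamp}: by \eqref{power2}, after $n$ independent observations through $P$ the \bh-type coefficient becomes $\rho_P(s)^n = e^{n\mu_P(s)}$, and this upper bounds the error probability of the maximum-likelihood decoder for every fixed $s \in [0,1]$ (with the endpoint cases handled by the continuous extension already discussed). Hence
\begin{equation}
    -\tfrac{1}{n}\ln \mathbb{P}(\hat\Theta \neq \Theta) \;\ge\; -\mu_P(s),
\end{equation}
and optimizing over $s$ gives the learning rate lower bound $-\min_{s \in [0,1]} \mu_P(s)$.

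For the converse, I would apply Lemma~\ref{lem:berlekamp} with $s^* \in \argmin_{s \in [0,1]} \mu_P(s)$. The lemma gives
\begin{equation}
    \mathbb{P}(\hat\Theta \neq \Theta) \;\ge\; \tfrac{1}{8}\exp\!\bigl(n\mu_P(s^*) - \sqrt{2n\mu_P''(s^*)}\bigr),
\end{equation}
and taking $-\tfrac{1}{n}\ln(\cdot)$ yields
\begin{equation}
    -\tfrac{1}{n}\ln \mathbb{P}(\hat\Theta \neq \Theta) \;\le\; -\mu_P(s^*) + \sqrt{\tfrac{2\mu_P''(s^*)}{n}} + \tfrac{\ln 8}{n}.
\end{equation}
Passing to the $\limsup$ as $n \to \infty$, both correction terms vanish since $\mu_P''(s^*)$ is a finite constant (the DMC $P$ has a finite output alphabet, so $\mu_P$ is analytic on a neighborhood of $s^*$ whenever the endpoints are handled carefully), giving $\mathcal{R} \le -\mu_P(s^*) = -\min_{s \in [0,1]} \mu_P(s)$. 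Combining the two bounds completes the proof.

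The only place that requires any thought is ensuring that $\mu_P''(s^*)$ is indeed finite so that the $\sqrt{\cdot/n}$ term really is $o(1)$. This is immediate when $s^*$ is interior to $(0,1)$; if $s^*$ lies at an endpoint (which can happen, e.g., when one of $P(\cdot|0), P(\cdot|1)$ is not absolutely continuous with respect to the other) one uses the continuous extension defined in the paragraph preceding the statement, and the conclusion still follows since the additive correction in Lemma~\ref{lem:berlekamp} is sub-linear in $n$ by construction. No other step poses any real obstacle.
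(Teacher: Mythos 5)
Your proposal is correct and follows essentially the same route as the paper: achievability comes from the $\rho_P(s)^n$ bound on the ML error probability (equation preceding Lemma~\ref{lem:berlekamp}), and the converse comes from Lemma~\ref{lem:berlekamp} with the observation that the $\sqrt{2n\mu_P''(s^*)}$ correction is sub-linear. Your extra remark on the finiteness of $\mu_P''(s^*)$ at endpoints is a reasonable sanity check that the paper leaves implicit, but it does not change the argument.
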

    
    \subsection{Achievability Result} \label{sec:dmc_ach}
    
    We now state the second main result of our paper, giving an achievability result for general binary-input DMCs.
    
    \begin{theorem} 
        Let $P$ and $Q$ be arbitrary binary-input discrete memoryless channels, and let $\mathcal{R}^*(P,Q)$ be the supremum of learning rates across all teaching and learning protocols. Then
        \begin{equation}
            \mathcal{R}^*(P,Q) \ge -\min_{s \in [0,1]} \max(\mu_{P}(s), \mu_{Q}(s)). \label{dmc-learning-rate}
        \end{equation}
        Moreover, this can be improved to
        \begin{equation}
            \mathcal{R}^*(P,Q) \ge -\min_{s \in [0,1]} \max\big (\mu_{P}(s), \min(\mu_{Q}(s), \mu_{Q}(1-s))\big).
            \label{tight-learning-rate}
        \end{equation}
        \label{dmc-main-result}
    \end{theorem}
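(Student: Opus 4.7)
The plan is to generalize the block-structured protocol of Section~\ref{sec:protocol} from BSCs to general binary-input DMCs, using the tilted coefficient $\rho(\cdot,\cdot,s)$ in place of the Bhattacharyya coefficient throughout and optimizing over $s$ at the end. The block geometry is unchanged (blocks of length $k$, with a one-block delay), and the teacher still encodes a summary statistic $\alpha$ of each block as a sorted binary string consisting of $kf(\alpha)$ ones followed by $k(1-f(\alpha))$ zeros; the differences are that $\alpha$ is now the empirical type of the $k$ observations through $P$ (of which there are only $\mathrm{poly}(k)$ values by the method of types), and that $f$ is a function of the type that also depends on $P$, $Q$, and $s$. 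The student performs ML decoding on the full length-$n$ received sequence, so by the obvious $s$-analog of~\eqref{rate-2} the learning rate is at least $-\frac{1}{k}\ln \rho(P_{W|\Theta=0},P_{W|\Theta=1},s)$, and the task reduces to designing $f$ so that this Bhattacharyya-like coefficient decays at the right rate.

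The analytical skeleton mirrors Section~\ref{sec:error}. First I would upgrade Lemma~\ref{lem:mixture} to the tilted case, the same argument going through with $(a+b)^{1-s}\le a^{1-s}+b^{1-s}$ in place of $\sqrt{a+b}\le\sqrt{a}+\sqrt{b}$; applying it once in each argument yields the exact analog of~\eqref{decompose}, namely a sum over type-pairs $(\alpha_0,\alpha_1)$ with weights $\p(\alpha_0\mid\Theta=0)^{1-s}\,\p(\alpha_1\mid\Theta=1)^{s}$ and inner factors $\rho(W_{\alpha_0},W_{\alpha_1},s)$. The weights are bounded by the standard method-of-types Chernoff estimate $\p(\alpha_\theta\mid\Theta=\theta)\le \mathrm{poly}(k)\,e^{-kD(\alpha_\theta\|P(\cdot|\theta))}$, and the inner factors are bounded using~\eqref{mult2}--\eqref{power2}: $W_{\alpha_0}$ and $W_{\alpha_1}$ agree everywhere outside the $k|f(\alpha_1)-f(\alpha_0)|$ positions between their two transitions, and on those positions the per-coordinate contribution is $e^{\mu_Q(s)}$ when $f(\alpha_1)>f(\alpha_0)$ and $e^{\mu_Q(1-s)}$ when $f(\alpha_1)<f(\alpha_0)$ (by unpacking the definition of $\rho(\cdot,\cdot,s)$ with its arguments swapped).

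The heart of the proof, and where I expect the main obstacle, is the construction of $f$. After taking logarithms, the typical summand has exponent
\begin{equation*}
-k(1-s)D(\alpha_0\|P(\cdot|0)) - ksD(\alpha_1\|P(\cdot|1)) + k\bigl(f(\alpha_1)-f(\alpha_0)\bigr)\mu_Q(s),
\end{equation*}
and I want this bounded above by $k\mu_P(s)$ uniformly in $(\alpha_0,\alpha_1)$, which combined with the trivial bound $k\mu_Q(s)$ obtained by collapsing $f$ toward a constant yields the $\max(\mu_P(s),\mu_Q(s))$ in~\eqref{dmc-learning-rate}. I would construct $f$ as a Legendre-type transform of $D(\cdot\|P(\cdot|0))$ and $D(\cdot\|P(\cdot|1))$, generalizing the BSC formula $f(\alpha)=D(\alpha\|p)/(2D(1/2\|p))$ with $2D(1/2\|p)$ replaced by $-\mu_Q(s)$ and with a symmetric extension on the other side; convexity of $D(\cdot\|P(\cdot|\theta))$ then drives the exponent cancellations just as in the BSC analysis, and summing the $\mathrm{poly}(k)$ type-pairs yields~\eqref{dmc-learning-rate} after sending $k\to\infty$ and optimizing over $s$. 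For the sharper~\eqref{tight-learning-rate}, I would split the decomposition by the sign of $f(\alpha_1)-f(\alpha_0)$ so that the $Q$-contribution involves $\mu_Q(s)$ on one side and $\mu_Q(1-s)$ on the other, and design $f$ with two branches each tuned to the more favorable of $\mu_Q(s),\mu_Q(1-s)$; the hardest technical step is verifying that such an $f$ still enforces the required cancellations in both directions simultaneously when $\mu_Q$ is asymmetric in $s$, which I anticipate requires an explicit construction in terms of the tilted distributions of both $P$ and $Q$.
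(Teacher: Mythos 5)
Your proposal is essentially sound and recognisably traces the paper's own route, but the paper organises the block summary and the cancellation differently, and it reaches \eqref{tight-learning-rate} by a much cheaper observation than the two-branch $f$ you anticipate.

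The paper does not use the full empirical type as the per-block summary. It instead uses the scalar per-block log-likelihood ratio $l(\vec Y)=\sum_i \ln\frac{P(Y_i|1)}{P(Y_i|0)}$, which is a deterministic function of the type (so the polynomial-in-$k$ counting survives) but has the crucial advantage of being one-dimensional and naturally ordered. The teacher encodes $l$ through a monotone ``transition point'' $g(l)$ that is defined \emph{directly in terms of exact tail probabilities}, e.g.\ $g(l)=\min\big(k,\tfrac{1-\bar s}{\mu_{\max}}\ln\p(L_0\ge l)\big)$ for $l\le 0$ (and the mirror expression for $l>0$), rather than via a Legendre transform of divergences. Because the mixture weights in the analog of \eqref{decompose} are then further upper-bounded by the same tail probabilities $\p(L_0\ge l_0)^{1-\bar s}\p(L_1\ge l_1)^{\bar s}$, the cancellation $g(l_0)\mu_{\max}+(k-g(l_1))\mu_{\max}+(g(l_1)-g(l_0))\mu_{\max}=k\mu_{\max}$ is literal, with no Chernoff slack. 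Your Legendre-transform construction of $f$ on types is essentially dual to this: the feasibility condition $\frac{(1-s)D(\alpha\|P(\cdot|0))}{-\mu_{\max}}\ge 1-\frac{sD(\alpha\|P(\cdot|1))}{-\mu_{\max}}$ that you would need is exactly the variational identity $\min_\alpha\big[(1-s)D(\alpha\|P(\cdot|0))+sD(\alpha\|P(\cdot|1))\big]=-\mu_P(s)\ge -\mu_{\max}$, and this is precisely the content of the paper's Lemma \ref{lem:tech_lem}, proved there from the Shannon--Gallager--Berlekamp tail bounds plus convexity of $\mu_P$. So the ideas agree; the paper's choice of scalar statistic and tail-probability $g$ just streamlines the bookkeeping, in particular making the case $l_0>l_1$ trivial via monotonicity of $g$ (whereas with the type you must re-derive the bound from the two defining inequalities on $f$).

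Two smaller points. First, your intermediate target ``bounded above by $k\mu_P(s)$'' should be $k\mu_{\max}=k\max(\mu_P(s),\mu_Q(s))$: the correct order of operations is to first bound the per-bit $Q$-contribution $\mu_Q(s)\le\mu_{\max}$ and only then perform the cancellation, which is what the paper's \eqref{eq:case1} does; defining $f$ with denominator $-\mu_P(s)$ and multiplying through by $\mu_Q(s)$ does not close when $\mu_Q(s)<\mu_P(s)$. Second, the refinement \eqref{tight-learning-rate} requires no two-branch $f$: the paper obtains it from \eqref{dmc-learning-rate} by simply noting the teacher may flip the inputs of $Q$, which replaces $\mu_Q(s)$ by $\mu_Q(1-s)$, and then applying the identity $\max(a,\min(b,c))=\min(\max(a,b),\max(a,c))$.
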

    
    While \eqref{tight-learning-rate} clearly implies \eqref{dmc-learning-rate}, we find the expression \eqref{dmc-learning-rate} easier to work with and prove.  Once \eqref{dmc-learning-rate} is proved, \eqref{tight-learning-rate} follows from the fact that one can choose to flip the inputs of $Q$, which amounts to replacing $\mu_Q(s)$ by $\mu_Q(1-s)$.\footnote{At first this leads to the seemingly different learning rate of $-\min\big( \min_{s \in [0,1]}  \max(\mu_{P}(s), \mu_{Q}(s)), \min_{s \in [0,1]}  \max(\mu_{P}(s), \mu_{Q}(1-s)) \big)$, but then \eqref{tight-learning-rate} can be recovered by taking the two $\min_{s \in [0,1]}$ outside the outer minimum (swapping minimization order has no effect), and applying the identity $\max(a,\min(b,c)) = \min( \max(a,b), \max(a,c))$ (verified by checking all 6 orderings of $a$, $b$, and $c$). }
    
    When the channels $P$ and $Q$ are identical, \eqref{dmc-learning-rate} reduces to the expression in Corollary \ref{1-hop-error}. Since the learning rate cannot be smaller than the 1-hop case (due to a data-processing argument similar to the binary symmetric setting), we conclude that Theorem \ref{dmc-main-result} provides the optimal learning rate in this case.  Other scenarios in which Theorem \ref{dmc-main-result} gives the optimal learning rate will be discussed in Section \ref{sec:dmc_conv}.
    
    \subsection{Protocol and Analysis for the Achievability Result} \label{sec:dmc_protocol}
    
    Fix an arbitrary value of $\bar{s} \in [0,1]$ and define
    \begin{equation}
        \mu_{\max} = \max(\mu_{P}(\bar{s}), \mu_{Q}(\bar{s})).
    \end{equation}
    To prove \eqref{dmc-learning-rate}, it suffices to show that $-\mu_{\max}$ is an achievable learning rate.  We assume without loss of generality that $\mu_{\max} < 0$, since a learning rate of zero is trivial.
    
    We again adopt the block structure used in Section \ref{sec:bsc}, with $k$ denoting the block size.  Suppose that in a single block, the teacher receives $\vec{Y} = (Y_1, Y_2,\ldots, Y_k)$.  We define the log-likelihood ratio (LLR) as
    \begin{equation}
        l(\vec{Y}) = \sum_{i=1}^k \ln \frac{P(Y_i|1)}{P(Y_i|0)}, \label{eq:def_l}
    \end{equation}
    and let $L_1$ (respectively, $L_{0}$) be the distribution of $l(\vec{Y})$ under $\Theta=1$ (respectively, $\Theta=0$).
    
    Upon receiving $\vec{Y}$, the teacher computes $l = l(\vec{Y})$, and sends $g(l)$ bits of $0$ followed by $k-g(l)$ bits of $1$, where $g(l)$ is defined as follows:
    \begin{equation}
        g(l) = 
        \begin{cases}
            \min(k, \frac{1-\bar{s}}{\mu_{\max}} \ln \p(L_{0} \geq l)) & l\leq 0\\
            \max(0, k - \frac{\bar{s}}{\mu_{\max}}\ln \p(L_{1} \leq l)) & l> 0
        \end{cases} \label{eq:def_g}
    \end{equation}
    Since $1-\bar{s}\geq 0$, $\mu_{\max}<0$, and $\ln \p(L_{0}\geq l) \leq 0$, we have
    \begin{equation}
        \frac{1-\bar{s}}{\mu_{\max}} \ln \p(L_{0} \geq l) \geq 0, \label{eq:pos1}
    \end{equation}
    and similarly,
    \begin{equation}
        \frac{\bar{s}}{\mu_{\max}}\ln \p(L_{1} \leq l) \geq 0, \label{eq:pos2}
    \end{equation}
    which implies that $0\leq g(l) \leq k$ for all $l$.  It may be necessary to round $g(l)$ to the nearest integer, but similarly to Section \ref{sec:bsc}, this does not affect the result, and it is omitted in our analysis.
    
    The main difference in the analysis compared to Section \ref{sec:bsc} is establishing the following technical lemma. 
    
    \begin{lemma} \label{lem:tech_lem}
        Under the preceding setup and definitions, we have for all $l$ that
        \begin{equation}
            \frac{1-\bar{s}}{\mu_{\max}} \ln \p(L_{0} \geq l) \geq	k - \frac{\bar{s}}{\mu_{\max}} \ln \p(L_1 \leq l).
            \label{f-bound-1}
        \end{equation}
    \end{lemma}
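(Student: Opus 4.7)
The plan is to prove Lemma \ref{lem:tech_lem} by applying Chernoff bounds to both tail probabilities appearing in the statement, and then taking a convex combination with weights $(1-\bar s, \bar s)$ so that the terms linear in $l$ cancel exactly.

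First, I would rewrite the target inequality in the equivalent form
\begin{equation*}
(1-\bar s)\ln \p(L_0 \ge l) + \bar s \ln \p(L_1 \le l) \le k\,\mu_{\max},
\end{equation*}
which is obtained by multiplying \eqref{f-bound-1} through by $\mu_{\max}<0$ (flipping the inequality) and rearranging. This reformulation makes the structure much more transparent: it is the statement that a specific convex combination of log-tail probabilities of the LLR under the two hypotheses is upper bounded by $k\,\mu_{\max}$.

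Next, I would establish the two single-tail Chernoff bounds on the LLR. Using the moment generating function of $L_0$ under $\Theta=0$ and exploiting the product form of \eqref{eq:def_l}, a standard computation gives
\begin{equation*}
\e_{\Theta=0}\bigl[e^{\bar s L_0}\bigr] = \Bigl(\sum_y P(y|0)^{1-\bar s}P(y|1)^{\bar s}\Bigr)^{k} = \rho_P(\bar s)^{k} = e^{k\mu_P(\bar s)},
\end{equation*}
so a Markov-Chernoff step yields $\p(L_0 \ge l) \le e^{-\bar s\, l + k\mu_P(\bar s)}$. By the symmetric computation applied to $-L_1$ under $\Theta=1$ with exponent $1-\bar s \ge 0$, the MGF again collapses to $\rho_P(\bar s)^k$, giving $\p(L_1 \le l) \le e^{(1-\bar s)\,l + k\mu_P(\bar s)}$.

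Finally, I would take logs of these two bounds, multiply the first by $(1-\bar s)$ and the second by $\bar s$, and add. The cross terms $-(1-\bar s)\bar s\, l$ and $\bar s(1-\bar s)\, l$ cancel, leaving precisely
\begin{equation*}
(1-\bar s)\ln \p(L_0 \ge l) + \bar s \ln \p(L_1 \le l) \le k\mu_P(\bar s) \le k\mu_{\max},
\end{equation*}
where the last step uses $\mu_P(\bar s) \le \max(\mu_P(\bar s),\mu_Q(\bar s)) = \mu_{\max}$ together with $\mu_{\max}<0$. Dividing by $\mu_{\max}<0$ (flipping the inequality) and isolating the first term recovers \eqref{f-bound-1}.

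There is no real technical obstacle here: the only thing to notice is the correct choice of convex weights $(1-\bar s,\bar s)$, which is dictated by the requirement that the $l$-dependent terms in the two Chernoff exponents cancel. The step that will require a brief sanity check when writing it out is the sign manipulation when dividing by the negative quantity $\mu_{\max}$, and the observation that both sides of the auxiliary inequality are being compared in the (negative) direction consistent with the sign flip.
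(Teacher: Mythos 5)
Your proof is correct, and it takes a genuinely different route from the paper's. The paper cites the Shannon--Gallager--Berlekamp tail bounds (their Lemma on $\p(L_0 \geq \mu_P'(s))$ and $\p(L_1 \leq \mu_P'(s))$), which are precisely the Chernoff bounds evaluated at the \emph{optimal} tilting parameter $t$ for the given threshold; this requires finding $t$ with $l = \mu_P'(t)$, handling the cases where $l$ falls outside the range of $\mu_P'$, and then using a tangent-line (convexity of $\mu_P$) argument to relate the resulting exponent at $t$ back to $\mu_P(\bar s)$. You skip all of that by applying the Chernoff/Markov step at the \emph{fixed} tilting parameter $\bar s$ for both tails, so that the two bounds are $\p(L_0 \geq l) \le e^{-\bar s l + k\mu_P(\bar s)}$ and $\p(L_1 \leq l) \le e^{(1-\bar s)l + k\mu_P(\bar s)}$, and the weights $(1-\bar s, \bar s)$ are exactly what makes the $l$-dependent terms cancel. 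This is shorter and avoids the endpoint case-analysis entirely. The one point worth stating explicitly (as the paper does implicitly by assuming $\mu_{\max} < 0$) is that the two Markov steps require $\bar s > 0$ and $1 - \bar s > 0$ respectively; since $\mu_P(0) = \mu_P(1) = 0$ in this setting, the standing assumption $\mu_{\max} < 0$ already forces $\bar s \in (0,1)$, so your argument covers every case the paper's does. One could also note that your proof makes the role of $\bar s$ as a Chernoff parameter transparent, whereas the paper's route highlights the convexity of $\mu_P$; both give the same constant $k\mu_P(\bar s) \le k\mu_{\max}$, so neither is tighter.
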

    \begin{proof}
        The proof is based on probabilistic bounds on log-likelihood ratios given in \cite{shannon1967}, along with algebraic manipulations that are elementary but rather technical; the details are given in Appendix \ref{app:tech_lem}.
    \end{proof}
    
    We claim that this result implies
    \begin{equation}
        \frac{1-\bar{s}}{\mu_{\max}}\ln \p(L_{0} \geq l) \geq g(l) \geq k - \frac{\bar{s}}{\mu_{\max}}\ln \p(L_{1} \leq l).
        \label{f-bound-2}
    \end{equation}
    To see this, we check the possible cases in the definition of $g(l)$ in \eqref{eq:def_g}.  If $l \le 0$, then the left inequality in \eqref{f-bound-2} is trivial, and the right inequality follows by considering two sub-cases: If the $\min(k,\cdot)$ in \eqref{eq:def_g} is attained by $k$, then we apply \eqref{eq:pos2}, and otherwise, we apply \eqref{f-bound-1}.  The case $l > 0$ is handled similarly.
    
    From here, the analysis is similar to that of Section \ref{sec:bsc}. Observe that we have the Markov chain $\Theta \rightarrow l(\vec{Y}) \rightarrow W$, where $W$ denotes the $k$ symbols received by the student in a single block.  Since $W$ takes on two different distributions depending on whether $\Theta=1$ or $\Theta=0$, we may also view $W$ as the output of a `single-use' discrete memoryless channel with input $\Theta$ and an output of length $k$.  With a slight abuse of notation, we define the quantities $\mu_W(s)$ and $\rho_W(s)$ according to this channel.
    
    To upper bound $\rho_W(s)$, we use  the following straightforward generalization of Lemma \ref{lem:mixture}.
    
    \begin{lemma}
        Suppose that $A$ follows a mixture distribution described by $[p_1,A_1; \ldots; p_k,A_k]$. Then for any distribution $B$, we have
        \begin{equation}
            \rho(A,B,s) \leq \sum_{i=1}^k p_i^{1-s} \rho(A_i, B, s),
        \end{equation}
        and
        \begin{equation}
            \rho(B,A,s) \leq \sum_{i=1}^k p_i^{s} \rho(B, A_i, s).
        \end{equation}
        \label{general-mixture}
    \end{lemma}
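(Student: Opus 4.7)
The plan is to mimic the proof of Lemma \ref{lem:mixture}, replacing the scalar inequality $\sqrt{a+b} \le \sqrt{a} + \sqrt{b}$ by its generalization $\bigl(\sum_i a_i\bigr)^r \le \sum_i a_i^r$ for $a_i \ge 0$ and $r \in [0,1]$. This subadditivity is a standard consequence of the concavity of $x \mapsto x^r$ on $[0,\infty)$ together with the fact that this map vanishes at zero; I would just invoke it as a known scalar fact, taking $r = 1-s$ for the first claim and $r = s$ for the second.

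For the first inequality, I would substitute the mixture representation $\p(A = x) = \sum_{i=1}^k p_i \p(A_i = x)$ directly into the definition of $\rho(A,B,s)$:
\begin{equation}
\rho(A,B,s) = \sum_{x \in \Omega} \Bigl(\sum_{i=1}^k p_i \p(A_i = x)\Bigr)^{1-s} \p(B=x)^s.
\end{equation}
Since $1-s \in [0,1]$, the subadditivity bound applied to the inner sum gives
\begin{equation}
\rho(A,B,s) \le \sum_{x \in \Omega} \sum_{i=1}^k \bigl(p_i \p(A_i=x)\bigr)^{1-s} \p(B=x)^s,
\end{equation}
and then factoring out $p_i^{1-s}$ and swapping the order of summation yields $\sum_i p_i^{1-s} \rho(A_i, B, s)$, as desired.

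For the second inequality, the argument is essentially symmetric. Now the mixture appears in the second slot of $\rho$, so it is raised to the power $s \in [0,1]$ rather than $1-s$: after substituting $\p(A=x) = \sum_i p_i \p(A_i=x)$ into $\rho(B,A,s) = \sum_x \p(B=x)^{1-s} \p(A=x)^s$, the same subadditivity inequality (with exponent $s$) lets me pull the sum over $i$ outside, with a factor of $p_i^s$, and identify what remains as $\rho(B,A_i,s)$.

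I do not expect any real obstacle here; the proof is essentially a bookkeeping generalization of Lemma \ref{lem:mixture}, and the only thing to watch is that the correct exponent ($1-s$ versus $s$) is used depending on which slot of $\rho(\cdot,\cdot,s)$ holds the mixture. The case $s \in \{0,1\}$, where subadditivity degenerates, can be handled either by continuity of $\rho(\cdot,\cdot,s)$ in $s$ or by direct inspection, since in those boundary cases the inequalities hold trivially.
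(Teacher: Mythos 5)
Your proof is correct and follows exactly the approach the paper indicates: mimic Lemma \ref{lem:mixture} but replace $\sqrt{a+b}\le\sqrt{a}+\sqrt{b}$ with the subadditivity $(a+b)^r\le a^r+b^r$ for $r\in[0,1]$, using $r=1-s$ for the first inequality and $r=s$ for the second. Nothing further is needed.
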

    The proof of this lemma is identical to that of Lemma \ref{lem:mixture}, except that $\sqrt{a+b}\leq \sqrt{a}+\sqrt{b}$ is replaced by the more general $(a+b)^{s} \leq a^{s} + b^{s}$ for $s \in [0,1]$ (and similarly for $1-s$).
    
    Decomposing each distribution $P_{W|\Theta=1}$ and $P_{W|\Theta=0}$ as a mixture based on the (finitely many) possible values of $l$, and applying Lemma \ref{general-mixture} twice, we obtain
    \begin{align}
        \rho_W(\bar{s}) 
        &\leq \sum_{l_{0}, l_1} \p(L_{0} = l_{0})^{1-\bar{s}} \p(L_1 = l_1)^{\bar{s}} \rho(P_{W|l_0}, P_{W|l_1}, \bar{s}) \\
        &\leq \sum_{l_{0}, l_1} \p(L_{0} \ge l_{0})^{1-\bar{s}} \p(L_1 \ge l_1)^{\bar{s}} \rho(P_{W|l_0}. P_{W|l_1}, \bar{s}). \label{eq:rhoW}
    \end{align}
    To help simplify this expression, we use the following consequence of \eqref{f-bound-2} (again recalling $\mu_{\max} < 0$):
    \begin{multline}
        \p(L_{0}\geq l_{0})^{1-\bar{s}} \p(L_1\geq l_1)^{\bar{s}} \\ \leq  \exp(g(l_0) \cdot \mu_{\max})  \exp((k-g(l_1))\cdot \mu_{\max}).  \label{eq:useful}
    \end{multline}	
    Suppose first that $l_0 \leq l_1$.  Since $g$ is an increasing function and only the bits in between positions $g(l_0)$ and $g(l_1)$ differ, we have
    \begin{align}
        \rho(P_{W|l_0}, P_{W|l_1}, \bar{s}) 
        &= \exp(\mu_Q(\bar{s}) \cdot (g(l_1)-g(l_0)))  \\
        &\leq  \exp(\mu_{\max}\cdot (g(l_1)-g(l_0))), \label{eq:case1}
    \end{align}
    noting that each differing bit contributes $e^{\mu_Q(\bar{s})} \le e^{\mu_{\max}}$ to the total.  Combining \eqref{eq:useful} and \eqref{eq:case1}, the terms containing $g(\cdot)$ cancel, and we are left with
    \begin{multline}
        \sum_{l_0 \leq l_1} \p(L_{0}\geq l_{0})^{1-\bar{s}} \p(L_1\geq l_1)^{\bar{s}} \rho(P_{W|l_0}, P_{W|l_1}, \bar{s}) \\ \leq \sum_{l_0 \leq l_1} \exp(k \cdot \mu_{\max}). \label{eq:final_bound0}
    \end{multline}
    On the other hand, if $l_0 > l_1$, then the increasing property of $g$ gives
    \begin{align}
        &\p(L_{0}\geq l_{0})^{1-\bar{s}} \p(L_1\geq l_1)^{\bar{s}}  \nonumber \\
        &\quad\leq \exp(g(l_0) \cdot \mu_{\max}) \exp((k-g(l_1)) \mu_{\max}) \\
        &\quad\leq \exp(k \cdot \mu_{\max}),
    \end{align}
    and combining this with the fact that $\rho \leq 1$, we obtain
    \begin{multline}
        \sum_{l_{0} > l_1} \p(L_{0}\geq l_{0})^{1-\bar{s}} \p(L_1\geq l_1)^{\bar{s}} \rho(P_{W|l_0}, P_{W|l_1}, \bar{s}) \\ \leq \sum_{l_{0} > l_1} \exp(k \cdot \mu_{\max}). \label{eq:final_bound}
    \end{multline}
    
    
    By the definition of $l(\vec{Y})$ in \eqref{eq:def_l}, the value of $l$ only depends on the number of occurrences of each output symbol in the block of length $k$, and not the specific order.  Hence, with a finite output alphabet $\mathcal{Y}$, the number of possible $l$ values is upper bounded by $(k+1)^{|\mathcal{Y}|}$ \cite[Ch.~2]{Csi11}, and we conclude from \eqref{eq:rhoW}, \eqref{eq:final_bound0} and \eqref{eq:final_bound} that
    \begin{equation}
        \mu_W(\bar{s}) = \ln \rho_W(\bar{s}) \leq k \cdot \mu_{\max} + \mathcal{O}(\log k).
    \end{equation}
    As with \eqref{rate-2}, this implies that the overall learning rate $\mathcal{R}$ satisfies
    \begin{equation}
        \mathcal{R} \geq -\frac{1}{k} \mu_W(\bar{s}) = -\mu_{\max} + \mathcal{O}\left(\frac{\log k}{k}\right).
    \end{equation}
    By setting $k$ large enough, we can obtain any rate arbitrarily close to $-\mu_{\max}$, completing the proof of Theorem \ref{dmc-main-result}.
    
    \subsection{A Converse Result and Its Consequences} \label{sec:dmc_conv}
    
    As hinted following Theorem \ref{dmc-main-result}, by data processing arguments, we cannot obtain a learning rate greater than
    \begin{equation}
        -\max\Big(\min_{s \in [0,1]} \mu_P(s), \min_{s \in [0,1]} \mu_Q(s)\Big),
        \label{naive-bound}
    \end{equation}
    and this matches Theorem \ref{dmc-main-result} when $P=Q$.  More generally, we can immediately deduce that Theorem \ref{dmc-main-result} gives the optimal learning rate whenever $Q$ is a {\em degraded} version of $P$ (i.e., there exists an auxiliary channel $U$ such that composing $P$ with $U$ gives an overall channel with the same transition law as $Q$), or vice versa.  This is because the teacher (respectively, student) can artificially introduce extra noise to reduce to the case that $P=Q$.
    
    In this section, we give an additional upper bound on the learning rate (i.e., converse) that improves on the simple one in \eqref{naive-bound}, and establishes the tightness of Theorem \ref{dmc-main-result} in certain cases beyond those mentioned above.
    To state this converse, we introduce an additional definition.  Momentarily departing from the teaching/learning setup, consider the following two-round 1-hop communication scenario defined in terms of the DMCs $P,Q$ and a parameter $\gamma \in [0,1]$:
    \begin{itemize}
        \item The sender seeks to communicate one of two messages, $\Theta \in \{0,1\}$, each occurring with probability $\frac{1}{2}$;
        \item In the first round of communication, the sender transmits $\gamma n$ symbols via independent uses of $P$ without feedback;
        \item After this first round, the sender observes the corresponding $\gamma n$ output symbols via a noiseless feedback link;
        \item In the second round of communication, the sender transmits $(1-\gamma)n$ symbol via independent uses of $Q$, with no further feedback.
    \end{itemize}
    This is a somewhat unconventional setup with independent and non-identical channel uses, as well as partial feedback.  We let $E_{2}(P,Q,\gamma)$ denote the best possible error exponent under this setup as $n \to \infty$, with the subscript 2 representing its two-round nature.
    
    \begin{theorem} \label{thm:conv}
        Let $P$ and $Q$ be arbitrary binary-input discrete memoryless channels, and let $\mathcal{R}^*(P,Q)$ be the supremum of learning rates across all teaching and learning protocols. Then, for any  $\gamma \in [0,1]$, we have
        \begin{equation}
            \mathcal{R}^*(P,Q) \le E_2(P,Q,\gamma).
        \end{equation}
    \end{theorem}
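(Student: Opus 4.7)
The plan is to prove $\mathcal{R}^*(P,Q) \le E_2(P,Q,\gamma)$ by a simulation argument: for any teaching/learning protocol with blocklength $n$ and error probability $\epsilon_n$, I will construct a two-round protocol with the same $n$ and the given $\gamma$ whose error probability is also $\epsilon_n$. Since $E_2(P,Q,\gamma)$ is the supremum of achievable two-round error exponents, this will give $E_2(P,Q,\gamma) \ge -\limsup_{n\to\infty}\frac{1}{n}\ln\epsilon_n$ for every teaching protocol, and hence $E_2(P,Q,\gamma) \ge \mathcal{R}^*(P,Q)$. The observation enabling the reduction is that the Round 1 sender knows $\Theta$ outright, which is strictly more than the teacher ever observes.

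Given a teaching protocol with teacher maps $\hat X_i = \phi_i(Y_1,\ldots,Y_i)$ and decoder $\psi$ (treating $\gamma n$ as an integer for simplicity), I would build the two-round protocol as follows. In Round 1 the sender transmits $\Theta$ across each of the $\gamma n$ uses of $P$; the receiver then observes $Y_1,\ldots,Y_{\gamma n}$ directly and the sender obtains the same sequence through the feedback link. In Round 2 the sender, using private randomness, draws synthetic samples $\tilde Y_i \sim P(\cdot\mid\Theta)$ for $\gamma n < i \le n$, computes $\hat X_i = \phi_i(Y_1,\ldots,Y_{\gamma n},\tilde Y_{\gamma n+1},\ldots,\tilde Y_i)$, and transmits these symbols through $Q$. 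The receiver, having observed $Y_1,\ldots,Y_{\gamma n}$ and $Z_{\gamma n+1},\ldots,Z_n$, would use the publicly known $\phi_i$'s to recompute $\hat X_1,\ldots,\hat X_{\gamma n}$, generate synthetic $\tilde Z_i \sim Q(\cdot\mid \hat X_i)$ by private randomness for $i \le \gamma n$, and output $\psi(\tilde Z_1,\ldots,\tilde Z_{\gamma n},Z_{\gamma n+1},\ldots,Z_n)$.

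The main step is a distributional verification: thanks to the memorylessness of $P$ and $Q$, the synthetic $\tilde Y_i$'s and $\tilde Z_i$'s have precisely the same conditional laws as the corresponding original variables, so the full tuple $(\Theta,\hat X_1,\ldots,\hat X_n,Z_1,\ldots,Z_n)$, with the simulated $\tilde Z_i$'s substituted at the first $\gamma n$ indices, has the same joint distribution as in the teaching protocol. Consequently $\psi$ operates on an input with identical law and achieves exactly the same error probability. I expect the main obstacle to be the bookkeeping needed to confirm that the construction respects the two-round constraints — the Round 1 input is the non-adaptive constant sequence $(\Theta,\ldots,\Theta)$ (no intra-round feedback is consumed), Round 2 inputs are deterministic functions of $\Theta$, the Round 1 feedback, and the sender's private randomness, and randomized decoders fall within the usual conventions for defining $E_2(P,Q,\gamma)$ — together with a standard rounding argument to handle non-integer $\gamma n$, which changes the exponent by only $o(1)$.
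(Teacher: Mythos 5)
Your proof is correct and follows essentially the same approach as the paper: both hinge on the observation that the two-round sender's knowledge of $\Theta$ (plus the round-one feedback) strictly dominates what the teacher ever observes, so the two-round problem is at least as easy. The paper phrases this as a genie-aided argument (granting the student the teacher's transmissions $\hat{X}_1,\dotsc,\hat{X}_{\gamma n}$ and the teacher the true $\Theta$ after time $\gamma n$), while you make the reduction fully explicit via a simulation using private randomness at both ends; these are the same argument presented from two sides.
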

    \begin{proof}
        Consider a ``genie-aided'' setup in which the teacher and student are given the following additional information:
        \begin{itemize}
            \item[(i)] The student can observe the input $\hat{X}$ (rather than only the output $Z$) of the teacher-student channel $Q$ from time $1$ up to $\gamma n$.
            \item[(ii)] The teacher is given the true value of $\Theta$ after time $\gamma n$.
        \end{itemize}
        Since the teacher and student can always choose to ignore this additional information, any upper bound on the learning rate in this genie-aided setting is also an upper bound in the original setting.
        
        The theorem follows by noting that this genie-aided setup can be viewed as an instance of the above two-round communication setup:
        \begin{itemize}
            \item For the first $\gamma n$ symbols, once the student is given $\hat{X}$, no further information is provided by $Z$ (a ``degraded'' version of $\hat{X}$), and the student's first $\gamma n$ observed symbols correspond to repeatedly observing $\Theta$ through the channel $P$ without feedback.
            \item Similarly, for the last $(1-\gamma)n$ symbols, once the teacher is given $\Theta$, no further information is provided by $Y$, but the value of $\Theta$ still needs to be conveyed to the student via the uses of $Q$.  Hence, the student's final $(1-\gamma) n$ observed symbols correspond to using $Q$ with no further feedback, but with the first $\gamma n$ outputs being available at the sender (teacher).
        \end{itemize}
    \end{proof}
    
    In the remainder of this subsection, we connect the converse result of Theorem \ref{thm:conv} with the achievability result of Theorem \ref{dmc-main-result} in two different ways:
    \begin{itemize}
        \item Under a certain technical assumption (Assumption \ref{assump1} below), we show that there exists $\gamma \in [0,1]$ such that the exponent in Theorem \ref{dmc-main-result} equals $E_1(P,Q,\gamma)$, which is defined in the same way as $E_2(P,Q,\gamma)$ but with no feedback being available in the communication setup.  Hence, the achievability and converse only differ with respect to the availability of feedback.
        \item In the case that $Q$ is a BSC, we show that Theorems \ref{dmc-main-result} and \ref{thm:conv} coincide for all $P$, thus establishing the optimal learning rate.
    \end{itemize}
    Starting with the former, we consider the following technical assumption.
    
    \begin{figure}
        \begin{centering}
            \includegraphics[width=0.95\columnwidth]{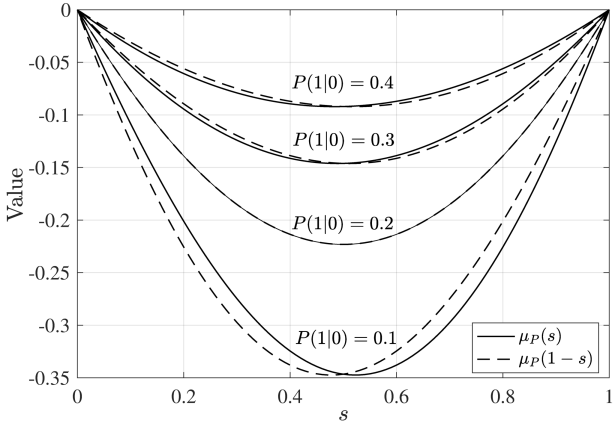}
            \par
        \end{centering}
        
        \caption{Example comparisons of $\mu_P(s)$ vs.~$\mu_P(1-s)$ for binary-input binary-output channels.  We fix $P(0|1) = 0.2$, and vary $P(1|0)$.  We observe that for every curve shown, it either holds that $\mu_P(s) \le \mu_P(1-s)$ for all $s \in [0,\frac{1}{2}]$, or $\mu_P(s) \ge \mu_P(1-s)$ for all $s \in [0,\frac{1}{2}]$. \label{fig:mu_examples}}
    \end{figure}
    
    \begin{assump} \label{assump1}
        The DMCs $P$ and $Q$ satisfy the property that for all $s \in \big[0,\frac{1}{2}\big]$, it holds that $\mu_P(s) \le \mu_P(1-s)$ and  $\mu_Q(s) \le \mu_Q(1-s)$.
    \end{assump}
    
    This assumption states that values of $s \in \big[0,\frac{1}{2}\big]$ are uniformly ``better'' ($\mu$ is more negative) than their counterparts flipped about $\frac{1}{2}$.  The opposite scenario (i.e., $\mu(s) \ge \mu(1-s)$ for all $s \in \big[0,\frac{1}{2}\big]$) can be handled similarly, or can more simply be viewed as reducing to Assumption \ref{assump1} upon swapping the two inputs.  Up to such swapping, we were unable to find any binary-input binary-output channel for which Assumption \ref{assump1} fails, and even in the multiple-output case, we found that it is difficult to find counter-examples.  As two concrete examples, the BSC trivially satisfies $\mu(s) = \mu(1-s)$, and the reverse Z-channel\footnote{This is the regular Z-channel with the inputs swapped.} with parameter $q \in (0,1)$ yields the increasing function $\mu(s) = (1-s)\ln q$.  See Figure \ref{fig:mu_examples} for further illustrative numerical examples with binary output.

    \begin{lemma} \label{lem:non_fb_exponent}
        For any binary-input DMCs $P$ and $Q$ satisfying Assumption \ref{assump1}, we have 
        \begin{equation}
            \min_{\gamma \in [0,1]} E_1(P,Q,\gamma) = -\min_{s \in [0,1]} \max\big (\mu_{P}(s), \mu_{Q}(s))\big), \label{eq:E1}
        \end{equation}
        where $E_1(P,Q,\gamma)$ is defined in the same way as $E_2(P,Q,\gamma)$ above, but with no feedback being available in the communication setup. 
    \end{lemma}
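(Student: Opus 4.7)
My plan is to compute $E_1(P,Q,\gamma)$ explicitly under Assumption \ref{assump1}, and then apply a minimax swap to handle the outer $\min_\gamma$. In the no-feedback setup, the sender must fix two deterministic codewords, one per hypothesis. Among the $\gamma n$ first-round uses of $P$, let $\pi$ be the fraction of positions at which the two codewords carry the input pair $(0,1)$ and $1-\pi$ the fraction carrying $(1,0)$ (positions where both codewords agree are useless and can be discarded); define $\pi' \in [0,1]$ analogously for the $(1-\gamma)n$ second-round uses of $Q$. By the multiplicative property \eqref{mult2}, together with the observation that flipped-input positions contribute $\rho_P(1-s)$ (resp.~$\rho_Q(1-s)$) rather than $\rho_P(s)$ (resp.~$\rho_Q(s)$) to the Bhattacharyya product, the per-symbol log-Bhattacharyya becomes
\[
\bar\mu(s; \pi, \pi') = \gamma\big[\pi \mu_P(s) + (1-\pi) \mu_P(1-s)\big] + (1-\gamma)\big[\pi' \mu_Q(s) + (1-\pi')\mu_Q(1-s)\big].
\]
Invoking the general (non-identical-input) form of \cite[Thm.~5]{shannon1967} cited here as Lemma \ref{lem:berlekamp}, the Bhattacharyya-optimized exponent admits a matching converse, giving $E_1(P,Q,\gamma) = -\min_{\pi,\pi' \in [0,1]}\min_{s \in [0,1]} \bar\mu(s;\pi,\pi')$.

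Next, I would use Assumption \ref{assump1} to solve the codeword optimization in closed form. Since $\bar\mu$ is affine in $\pi$ and $\pi'$, and the coefficients $\mu_P(s) - \mu_P(1-s)$ and $\mu_Q(s) - \mu_Q(1-s)$ are both nonpositive for $s \in \big[0,\tfrac{1}{2}\big]$ under Assumption \ref{assump1}, the inner minimum at any such $s$ is attained at $\pi = \pi' = 1$; the case $s \in \big[\tfrac{1}{2}, 1\big]$ is handled symmetrically via $s \mapsto 1-s$. Both cases collapse to the same optimized value, yielding
\[
E_1(P,Q,\gamma) = -\min_{s \in [0,1]}\big[\gamma \mu_P(s) + (1-\gamma)\mu_Q(s)\big].
\]

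Finally, I would apply Sion's minimax theorem to exchange $\max_\gamma$ and $\min_s$ when computing $\min_\gamma E_1$. The function $f(\gamma, s) = \gamma \mu_P(s) + (1-\gamma)\mu_Q(s)$ is affine (hence concave) in $\gamma$, and convex in $s$ since $\mu_P$ and $\mu_Q$ are each the log of a sum of exponentials in $s$; both domains $[0,1]$ are compact and convex. Sion's theorem then yields $\max_\gamma \min_s f = \min_s \max_\gamma f$, and for each fixed $s$ the inner $\max_\gamma$ of this convex combination of two numbers is simply $\max(\mu_P(s), \mu_Q(s))$, which delivers the claim. The main obstacle is the codeword step: Assumption \ref{assump1} is exactly what ensures that antipodal codewords are sender-optimal, and without it the sender could in principle mix ``straight'' and ``flipped'' positions to exploit channel asymmetry, producing a more delicate joint minimization that would not simplify to the advertised single-variable expression.
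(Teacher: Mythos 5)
Your proposal is correct, and it follows the paper's route fairly closely up to the codeword optimization step but diverges meaningfully in the last step. Like the paper, you reduce to the parametric form with a per-symbol expression involving fractions of ``straight'' and ``flipped'' positions (the paper writes these as $\beta_P,\beta_Q$ rather than $\pi,\pi'$), and you invoke Assumption~\ref{assump1} to conclude that antipodal (constant) codewords are optimal for the sender, collapsing $E_1(P,Q,\gamma)$ to $-\min_{s\in[0,1]}\big[\gamma\mu_P(s)+(1-\gamma)\mu_Q(s)\big]$. The divergence is in how the identity $\max_\gamma \min_s = \min_s \max_\gamma$ is established: the paper proves this explicitly in Lemma~\ref{lem:gamma} via an elementary case analysis (whether the maximizer of $\max(\mu_P,\mu_Q)$ coincides with a global minimizer of one of the curves, whether the derivatives at $s^*$ have opposite signs, etc.), together with a separate achievability argument setting $\beta_P=\beta_Q=1$, $s=s^*$. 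You instead invoke Sion's minimax theorem directly, using that the objective is affine in $\gamma$ and convex in $s$ on compact convex domains. Your route is shorter and more transparent, at the cost of appealing to a general minimax theorem rather than building a self-contained argument; the paper's Lemma~\ref{lem:gamma} also does a bit more work in that it produces a concrete optimal $\gamma^*$ (expressed via $\mu'_P(s^*)$ and $\mu'_Q(s^*)$) rather than just asserting existence of a saddle point. One small bookkeeping caveat: your $\min_s$ is nominally over $[0,1]$ after restricting to $\pi=\pi'=1$, whereas the optimization you actually solved yields the minimum over $[0,\frac12]$ (with the $[\frac12,1]$ piece handled by symmetry at $\pi=\pi'=0$); Assumption~\ref{assump1} guarantees these two minima coincide, but it is worth saying so explicitly before applying Sion.
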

    \begin{proof}
        See Appendix \ref{app:non_fb_exponent}
    \end{proof}
    
    Observe that \eqref{eq:E1} precisely equals the learning rate derived in Theorem \ref{dmc-main-result}, and that we would match the converse in Theorem \ref{thm:conv} if we could additionally show that $E_1 = E_2$.  Unfortunately, even the single round of feedback can strictly increase the error exponent (see Appendix \ref{app:fb_example} for an example), meaning that the optimal learning rate remains unclear.  On the positive side, it turns out that $E_1 = E_2$ whenever $Q$ is a BSC, and in fact, in this case, we can establish that Theorem \ref{dmc-main-result} gives the optimal learning rate even when Assumption \ref{assump1} is dropped.  Formally, we have the following.
    
    \begin{lemma} \label{lem:symmQ}
        For any binary-input DMC $P$, if $Q$ is a BSC with crossover probability $q \in \big(0,\frac{1}{2}\big)$, then the achievable learning rate in \eqref{dmc-learning-rate} is tight, i.e., $\mathcal{R}^*(P,Q) = -\min_{s \in [0,1]} \max(\mu_{P}(s), \mu_{Q}(s))$.
    \end{lemma}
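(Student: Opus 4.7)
Since the achievability direction of the lemma follows from Theorem~\ref{dmc-main-result} (the improved form \eqref{tight-learning-rate} coincides with \eqref{dmc-learning-rate} because $\mu_Q(s)=\mu_Q(1-s)$ for a BSC), the task is to prove the matching converse $\mathcal{R}^*(P,Q)\le -\min_s\max(\mu_P(s),\mu_Q(s))$. The plan is to combine Theorem~\ref{thm:conv}, which gives $\mathcal{R}^*(P,Q)\le E_2(P,Q,\gamma)$ for any $\gamma\in[0,1]$, with a Bhattacharyya-type upper bound on $E_2$ that exploits the BSC symmetry of $Q$, and then to optimize over $\gamma$ using Sion's minimax theorem.

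For any sender strategy in the two-round setup underlying $E_2(P,Q,\gamma)$ (with round-1 codewords $c_0,c_1\in\{0,1\}^{\gamma n}$ and feedback-dependent round-2 codewords $d(0,\vec y),d(1,\vec y)\in\{0,1\}^{(1-\gamma)n}$), the Bhattacharyya coefficient at parameter $s$ of the two joint distributions of $(\vec Y,\vec Z)$ factors as
\[
\rho_s=\sum_{\vec y}P_{\vec Y|0}(\vec y)^{1-s}P_{\vec Y|1}(\vec y)^s\cdot e^{\mu_Q(s)\cdot D_H(d(0,\vec y),d(1,\vec y))},
\]
where $D_H$ is Hamming distance. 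Because $\mu_Q(s)\le 0$, the sender's $\rho_s$-minimizing choice is $D_H=(1-\gamma)n$ for every $\vec y$---a feedback-independent strategy---and combining this with the sender's optimal round-1 polarity gives $\min_\sigma\rho_s=e^{n[\gamma\tilde{\mu}_P(s)+(1-\gamma)\mu_Q(s)]}$, where $\tilde{\mu}_P(s):=\min(\mu_P(s),\mu_P(1-s))$. Applying a non-iid extension of the Shannon-Gallager-Berlekamp converse (cf.\ Lemma~\ref{lem:berlekamp}) to the joint distributions then yields
\[
E_2(P,Q,\gamma)\le -\min_s\bigl[\gamma\tilde{\mu}_P(s)+(1-\gamma)\mu_Q(s)\bigr].
\]
The BSC symmetry $\mu_Q(s)=\mu_Q(1-s)$ is now crucial to eliminate $\tilde{\mu}_P$: setting $\phi(s):=\gamma\mu_P(s)+(1-\gamma)\mu_Q(s)$, the identity $\gamma\tilde{\mu}_P(s)+(1-\gamma)\mu_Q(s)=\min(\phi(s),\phi(1-s))$ implies $\min_s[\gamma\tilde{\mu}_P(s)+(1-\gamma)\mu_Q(s)]=\min_s\phi(s)$, since varying $s$ over $[0,1]$ covers all values of $\phi$ in either the $s$ or the $1-s$ slot.

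The resulting bound $E_2(P,Q,\gamma)\le -\min_s\phi(\gamma,s)$ is amenable to Sion's minimax theorem, since $\phi(\gamma,s)$ is linear (hence concave) in $\gamma$ and convex in $s$ (as a sum of convex functions) on the compact domain $[0,1]\times[0,1]$. One obtains
\[
\min_\gamma E_2(P,Q,\gamma)\le -\max_\gamma\min_s\phi=-\min_s\max_\gamma\phi=-\min_s\max(\mu_P(s),\mu_Q(s)),
\]
via the identity $\max_{\gamma\in[0,1]}[\gamma a+(1-\gamma)b]=\max(a,b)$, which matches the achievability and proves the lemma. The main obstacle is the non-iid extension of Lemma~\ref{lem:berlekamp} to the joint (possibly feedback-driven) two-round distributions, where one must verify that the correction term remains sublinear in $n$ so that the exponent $\min_s\phi$ dominates. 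It is also worth noting that the BSC symmetry identity above is precisely what allows the proof to go through without invoking Assumption~\ref{assump1} on $P$; for a non-BSC $Q$, the identity $\gamma\tilde{\mu}_P(s)+(1-\gamma)\mu_Q(s)=\min(\phi(s),\phi(1-s))$ would break down.
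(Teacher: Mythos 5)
Your proposal takes a genuinely different route from the paper, and it runs into a real gap at exactly the spot you flag. Briefly on the difference: you attack $E_2(P,Q,\gamma)$ head-on by computing the $\rho_s$ of the joint two-round \emph{feedback} distribution, argue that the $\rho_s$-minimizing sender strategy is feedback-independent with full-distance round-2 codewords, then invoke a non-iid Shannon--Gallager--Berlekamp converse and close with Sion's minimax theorem. The paper instead works with the \emph{exact} error probability $P_{\rm e}=\tfrac{1}{2}\sum_{\vec{y}}\min(\p(\vec{y}|0),\p(\vec{y}|1))$, uses BSC symmetry to show the inner sum over $\vec{y}_Q$ is invariant to the full-distance codeword pair, so the $\min_{\vec{x}_Q,\vec{x}'_Q}$ commutes with $\sum_{\vec{y}_P}$ and hence $E_1(P,Q,\gamma)=E_2(P,Q,\gamma)$; only after this reduction does it apply SGB, and then only to the non-feedback (product-distribution) problem. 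The paper's Lemma~\ref{lem:gamma} plays precisely the role you assign to Sion --- it is the nontrivial half of the same minimax identity, proved by hand via a short case analysis on the convexity of $\mu_P,\mu_Q$.

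The gap is the SGB correction term, and it is not a cosmetic one. For a feedback strategy in which the round-2 Hamming distance $D_H\bigl(d(0,\vec{y}_P),d(1,\vec{y}_P)\bigr)$ varies strongly with $\vec{y}_P$ (say $0$ on half the outcomes and $(1-\gamma)n$ on the other half), the log-likelihood ratio of the joint distribution has a conditional mean whose variance is $\Theta(n^2)$, so $\mu''_n(s^*)$ can be $\Theta(n^2)$ and the $\sqrt{2\mu''_n(s^*)}$ term in the SGB lower bound is $\Theta(n)$. This means the SGB converse is lossy by a constant in the exponent for such strategies, and your argument does not rule out their exponent exceeding $-\min_s\phi(\gamma,s)$. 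Note that your lower bound on $\rho_s$ does not help here: $\rho_s$ is an \emph{upper} bound on $P_{\rm e}$, so knowing $\rho_s$ is large for a bad strategy says nothing about its $P_{\rm e}$ being large --- only the SGB lower bound can, and you have not controlled its correction term. The paper sidesteps this entirely by establishing $E_1=E_2$ at the level of exact error probabilities \emph{before} any exponential bounding, so when SGB is finally invoked the underlying distribution is a product and $\mu''_n=O(n)$ is automatic (the paper's footnote to Lemma~\ref{lem:berlekamp} makes exactly this point). To repair your route you would need to either control $\mu''_n$ for an arbitrary feedback scheme, or first show feedback cannot help --- which is essentially the paper's argument.
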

    \begin{proof}
        See Appendix \ref{app:symmQ}.
    \end{proof}
    
    Using Lemma \ref{lem:symmQ} as a starting point, it is straightforward to establish that Theorem \ref{thm:conv} can indeed strictly improve on the trivial converse in \eqref{naive-bound}.
    
    \begin{cor} \label{cor:strict_improvement}
        There exist binary-input DMCs $P$ and $Q$ such that the optimal learning rate is strictly smaller than $-\max\big(\min_{s \in [0,1]} \mu_P(s), \min_{s \in [0,1]} \mu_Q(s)\big)$.
    \end{cor}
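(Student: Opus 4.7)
The plan is to apply Lemma \ref{lem:symmQ} and exhibit an explicit pair $(P,Q)$ separating the two bounds.  Because Lemma \ref{lem:symmQ} gives $\mathcal{R}^*(P,Q) = -\min_{s \in [0,1]} \max(\mu_P(s), \mu_Q(s))$ whenever $Q$ is a BSC, it suffices to find a binary-input DMC $P$ and a BSC $Q$ for which
\begin{equation}
\min_{s \in [0,1]} \max(\mu_P(s), \mu_Q(s)) \;>\; \max\!\Big(\min_{s} \mu_P(s),\, \min_{s} \mu_Q(s)\Big). \label{eq:targetCor}
\end{equation}

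I would take $P$ to be a generic asymmetric binary-input, binary-output channel with $P(0|0) = 1 - p_0$ and $P(0|1) = p_1$ for some $p_0 \ne p_1$ with $p_0 + p_1 \ne 1$.  The function $\rho_P(s)$ is then a sum of two exponentials with distinct nonzero slopes, hence strictly convex on $[0,1]$; moreover a short computation of $\rho_P'(\tfrac{1}{2})$ yields a generically nonzero value, so the unique minimizer $s_P^\ast$ of $\mu_P$ lies strictly away from $\tfrac{1}{2}$.  I would then choose $Q$ to be a BSC with crossover $q \in (0,\tfrac{1}{2})$ tuned so that $\min_s \mu_Q(s) = \ln(2\sqrt{q(1-q)})$ matches $m := \min_s \mu_P(s)$; such a $q$ exists because $q \mapsto \ln(2\sqrt{q(1-q)})$ is a continuous surjection from $(0,\tfrac{1}{2})$ onto $(-\infty, 0)$ and $m < 0$ for any non-degenerate $P$.

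To establish \eqref{eq:targetCor} under this construction, I would verify pointwise that $\max(\mu_P(s), \mu_Q(s)) > m$ for every $s \in [0,1]$.  For $s \ne s_P^\ast$, strict uniqueness gives $\mu_P(s) > m$, which already forces the max to exceed $m$.  For $s = s_P^\ast$, the facts that $\mu_Q$ is strictly convex and symmetric about $\tfrac{1}{2}$ with unique minimum $m$ at $\tfrac{1}{2}$, together with $s_P^\ast \ne \tfrac{1}{2}$, give $\mu_Q(s_P^\ast) > m$, and again the max exceeds $m$.  Continuity of $s \mapsto \max(\mu_P(s), \mu_Q(s))$ on the compact interval $[0,1]$ then promotes this pointwise strict inequality to $\min_s \max(\mu_P(s), \mu_Q(s)) > m$, which is precisely \eqref{eq:targetCor} since $\max(\min_s \mu_P, \min_s \mu_Q) = m$ by construction.

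The main obstacle is nothing conceptually deep: one only needs to pick a non-degenerate asymmetric $P$ (avoiding, e.g., the Z-channel, whose minimizer lies at an endpoint of $[0,1]$, and the trivial case $p_0 + p_1 = 1$) and confirm $s_P^\ast \in (0,1) \setminus \{\tfrac{1}{2}\}$ via the derivative check above; the rest of the argument is simply a ``mismatched minimizers'' observation combined with compactness.
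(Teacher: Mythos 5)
Your proof is correct and takes a genuinely different route from the paper's. The paper's proof re-uses the concrete example of Appendix~\ref{app:fb_example} (with $P$ a reverse Z-channel and $Q$ a BSC, parameters $0.8$ and $0.2$), and relies on the numerical observation in Figure~\ref{fig:fb_exponent} that every $\gamma \in (0,1)$ gives a strictly smaller exponent than the endpoints; the authors even remark that ``the reliance of the preceding proof on a numerical evaluation can be circumvented.'' Your argument is exactly such a circumvention: it isolates the real mechanism as a \emph{mismatch of minimizers} (the BSC's unique minimizer at $s = \tfrac{1}{2}$ versus an asymmetric $P$'s minimizer elsewhere), tunes $q$ so the two minimum values $\min_s \mu_P$ and $\min_s \mu_Q$ coincide at a common value $m$, and uses Lemma~\ref{lem:symmQ} together with compactness to lift the pointwise inequality $\max(\mu_P(s),\mu_Q(s)) > m$ to a strict gap in the min--max. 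This is cleaner and more general, showing the phenomenon is not tied to one particular $(P,Q)$ pair.

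One small exposition slip worth fixing: you write that $\rho_P(s)$ is a sum of two exponentials with distinct nonzero slopes ``hence strictly convex,'' and then treat this as giving a unique minimizer of $\mu_P$. Strict convexity of $\rho_P$ does \emph{not} by itself imply strict convexity of $\mu_P = \ln\rho_P$, since $\ln$ is concave (indeed, for the reverse Z-channel $\rho_P$ is strictly convex while $\mu_P$ is linear). The clean justification is via the variance representation from \cite{shannon1967} used in Lemma~\ref{lem:quadratic}: $\mu_P''(s)$ is the variance of the log-likelihood ratio under the tilted distribution \eqref{eq:Ptilde}, which is strictly positive as soon as the LLR takes two distinct values with positive probability; for a binary-output $P$ with $p_0,p_1 \in (0,1)$ and $p_0 + p_1 \ne 1$ the two LLR values $\ln\frac{p_1}{1-p_0}$ and $\ln\frac{1-p_1}{p_0}$ are distinct, so $\mu_P$ is strictly convex and the unique interior minimizer follows from $\mu_P'(0) = -D(P(\cdot|0)\|P(\cdot|1)) < 0$ and $\mu_P'(1) = D(P(\cdot|1)\|P(\cdot|0)) > 0$. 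Incidentally, your exclusion of the reverse Z-channel is unnecessary: its $\mu_P(s) = (1-s)\ln q$ is strictly monotone on $[0,1]$, so the minimizer is still unique (at $s=0 \ne \tfrac{1}{2}$), and your argument goes through unchanged for that channel as well.
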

    \begin{proof}
        This follows by letting $P$ be a (reverse) Z-channel and $Q$ be a BSC, with suitably-chosen parameters.
        See Appendix \ref{app:strict_improvement} for the details.
    \end{proof}
    
    \subsection{A Converse for Block-Structured Protocols} \label{sec:block_conv}
    
    While we have established several special cases where Theorem \ref{dmc-main-result} give the optimal learning rate, it remains open as to whether this is true in general.  As a final result, we complement our general converse (Theorem \ref{thm:conv}) with a {\em restricted} converse that only applies to block-structured protocols.  Specifically, we say that a teaching strategy is {\em block structured with length $k$} if it follows the structure in Figure \ref{block-protocol}:  For any positive integer $i$, the bits transmitted at indices $ki+1, \dotsc, k(i+1)$ are only allowed to depend on the bits received at indices $k(i-1)+1,\dotsc,ki$.
    
    \begin{theorem}	\label{thm:block_conv}
        For any $P$ and $Q$ satisfying Assumption \ref{assump1}, and any protocol such that the teaching strategy is block structured with length $k = o(n)$, the learning rate is at most $-\min_{s \in [0,1]} \max(\mu_{P}(s), \mu_{Q}(s))$. 
    \end{theorem}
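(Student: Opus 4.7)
The plan is to combine a one-shot analysis of the per-block effective channel with two complementary lower bounds on its log-Bhattacharyya function, and then appeal to Assumption \ref{assump1} to simplify the resulting expression to $-\min_s \max(\mu_P(s),\mu_Q(s))$.

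First, by the block structure and memorylessness, the $n/k-1$ non-initial received blocks $\vec{W}_1,\dotsc,\vec{W}_{n/k-1}\in\mathcal{Y}^k$ are conditionally i.i.d.~given $\Theta$. I would therefore view the problem as distinguishing two distributions from $n/k-1$ independent uses of a single effective binary-input channel $\Theta\to\vec{W}$; denote its log-Bhattacharyya function by $\mu_W(s)=\ln\rho(P_{\vec{W}|\Theta=0}, P_{\vec{W}|\Theta=1},s)$. Invoking Lemma \ref{lem:berlekamp} on this effective channel gives an error-probability lower bound whose dominant term is $(n/k)\min_s\mu_W(s)$. A routine observation that $\mu_W''(s)$ (the variance of a length-$k$ log-likelihood ratio) grows at most polynomially in $k$ makes the $\sqrt{\,\cdot\,}$ correction in the lemma sub-linear in $n$ whenever $k=o(n)$, so the learning rate is asymptotically at most $-\min_s\mu_W(s)/k$.

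The core of the plan is then to prove two parallel lower bounds on $\mu_W(s)$, holding for \emph{every} block-structured teacher. The first will be a direct data-processing inequality applied to the Markov chain $\Theta\to\vec{Y}\to\vec{W}$, giving $\mu_W(s)\ge k\mu_P(s)$. The second, which I expect to be the novel step, starts from the mixture representation $P_{\vec{W}|\Theta=\theta}(\vec{w})=\sum_{\vec{x}}\pi_\theta(\vec{x})\,Q^k(\vec{w}|\vec{x})$, where $\pi_\theta$ is the distribution of $\vec{\hat{X}}$ given $\Theta=\theta$. Since $x^{1-s}$ and $x^s$ are concave for $s\in[0,1]$, applying Jensen's inequality to each of these two mixtures and then using independence of the two mixture indices yields
\begin{equation}
    \rho_W(s) \ge \sum_{\vec{x},\vec{x}'} \pi_0(\vec{x})\pi_1(\vec{x}')\, \rho_Q(s)^{d_+(\vec{x},\vec{x}')}\,\rho_Q(1-s)^{d_-(\vec{x},\vec{x}')},
\end{equation}
where $d_+$ (respectively $d_-$) counts indices with $(x_i,x'_i)=(0,1)$ (respectively $(1,0)$). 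Since $d_+ + d_- \le k$ and $\rho_Q(s),\rho_Q(1-s)\in[0,1]$, each summand is at least $\min(\rho_Q(s),\rho_Q(1-s))^k$, which produces $\mu_W(s)\ge k\min(\mu_Q(s),\mu_Q(1-s))$.

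Combining the two bounds gives $\mu_W(s)\ge k\max\!\bigl(\mu_P(s),\min(\mu_Q(s),\mu_Q(1-s))\bigr)$ for every $s\in[0,1]$, and hence a learning-rate upper bound of $-\min_s\max(\mu_P(s),\min(\mu_Q(s),\mu_Q(1-s)))$. A short case analysis then shows that Assumption \ref{assump1} collapses this to $-\mu^*:=-\min_s\max(\mu_P(s),\mu_Q(s))$: on $s\in[0,\tfrac12]$ one has $\min(\mu_Q(s),\mu_Q(1-s))=\mu_Q(s)$ directly, and on $s\in[\tfrac12,1]$ the inequalities $\mu_P(s)\ge\mu_P(1-s)$ and $\min(\mu_Q(s),\mu_Q(1-s))=\mu_Q(1-s)$ show that the value at $s$ dominates $\max(\mu_P(1-s),\mu_Q(1-s))$, whose reflection lies in the first half and is at least $\mu^*$. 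The hard part of the whole plan will be the Jensen-based bound on $\mu_W(s)$ in terms of $\mu_Q$: the key insight is that applying Jensen \emph{before} the per-position factorization of $Q^k$ causes the contributions to decouple into $\rho_Q(s)$ and $\rho_Q(1-s)$ independently of any correlations that the teacher's function might induce in $\vec{\hat{X}}$.
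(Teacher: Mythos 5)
Your proof follows essentially the same route as the paper's: a data-processing inequality gives $\mu_W(s)\ge k\mu_P(s)$, Jensen's inequality applied to the mixture representation of $P_{W|\Theta}$ over $\vec{\hat{X}}$ gives $\mu_W(s)\ge k\min(\mu_Q(s),\mu_Q(1-s))$, and the Shannon--Gallager--Berlekamp converse plus Assumption~\ref{assump1} then finishes the argument exactly as in the paper's appendix. The only place you are a bit loose is the $\mu_W''$ control: to make $\sqrt{(n/k)\mu_W''}=o(n)$ for \emph{all} $k=o(n)$ you need the specific bound $\mu_W''=O(k^2)$ rather than ``at most polynomial'' (a cubic bound, say, would only cover $k=o(\sqrt{n})$), and this is precisely what the paper's short final lemma establishes by showing the per-block log-likelihood ratio is $O(k)$ in magnitude on the relevant support.
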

    \begin{proof}
        See Appendix \ref{app:block_converse}.
    \end{proof}
    
    This result serves as evidence that Theorem \ref{dmc-main-result} may provide the optimal learning rate under Assumption \ref{assump1}, or at least that to have any hope of improving, the teaching strategy should have some form of {\em long-term memory}.
    
    %
    
    \section{Conclusion and Discussion} \label{sec:conclusion}
    
    We have established optimal learning rates for the problem of teaching and learning a single bit under binary symmetric noise, using a simple and computationally efficient block-structured strategy. We have also adapted this technique to general binary-input DMCs.
    
    As discussed above, while the optimal learning rate for general binary-input DMCs follows in several cases of interest, it remains unknown in general, leaving us with the following open problem.
    
    \begin{conj}
        Does \eqref{tight-learning-rate} give the optimal learning rate for arbitrary binary-input DMCs $P$ and $Q$?
    \end{conj}

    An important step towards answering this question would be removing Assumption \ref{assump1} in the results that currently make use of it.  Also regarding this assumption, we expect that it should hold true for all binary-input channels, and it would be of interest to establish whether this is indeed the case.
    
    Another possible direction is to increase the input alphabet size of $Q$, removing the assumption of binary inputs. In the 1-hop case, having a channel with more inputs does not improve the learning rate beyond choosing the two best inputs for the channel \cite{shannon1967}.  By a similar idea, we can immediately deduce an achievability result via Theorem \ref{dmc-main-result}, but we are left with the following open problem.
    
    \begin{conj}
        Does there exist a pair of DMCs $P, Q$, such that every restriction of $Q$ to two inputs gives a strictly suboptimal learning rate?
    \end{conj}
    
    In addition to  these open problems, interesting directions may include the case that the channel transition laws are unknown to the teacher and/or student, and the case of teaching more than a single bit.
    
    \appendix
    
    \subsection{Proof of Lemma \ref{lem:tech_lem} (Technical Result for General Binary-Input DMCs)} \label{app:tech_lem}
    
    We first state an auxiliary result from \cite{shannon1967}.  When the agent seeks to decide between the alternatives $\Theta=1$ and $\Theta=0$, the natural decision rule is to set a threshold on the corresponding log-likelihood ratio. The following result bounds the probability that this log-likelihood ratio test fails under a given threshold.
    
    \begin{lemma}
        {\em \cite[Proof of Thm.~5]{shannon1967}} Let $P$ be a binary-input DMC, and for each output symbol $y$, define
        \begin{equation}
            l(y) = \ln \frac{P(y|1)}{P(y|0)}
        \end{equation}
        to be the log-likelihood ratio between the two conditional distributions.  Let $L_1$ and $L_{0}$ follow the distribution of $l(Y)$ under $Y \sim P(\cdot | 1)$ and $Y \sim P(\cdot |0)$ respectively.   For all $s \in (0,1)$, we have
        \begin{equation}
            \p(L_{0} \geq \mu_P'(s))
            \leq \exp[\mu_P(s) - s\mu_P'(s)],
            \label{bound1}
        \end{equation}
        and
        \begin{equation}
            \p(L_1 \leq \mu_P'(s)) \leq \exp[\mu_P(s) + (1-s)\mu_P'(s)].
            \label{bound2}
        \end{equation}
        \label{thm:error-bound}
    \end{lemma}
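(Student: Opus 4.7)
The plan is a standard Chernoff-bound (tilted-distribution) argument, where the key observation is that the moment generating functions of both $L_0$ and $L_1$ can be evaluated in closed form in terms of $\rho_P(s) = e^{\mu_P(s)}$.

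First, I would compute the moment generating function of $L_0$ at the point $s$.  For $Y \sim P(\cdot|0)$ and $L_0 = l(Y) = \ln\frac{P(Y|1)}{P(Y|0)}$, a direct expansion yields
$$\e[e^{sL_0}] = \sum_y P(y|0)\Bigl(\tfrac{P(y|1)}{P(y|0)}\Bigr)^s = \sum_y P(y|0)^{1-s}P(y|1)^s = \rho_P(s) = e^{\mu_P(s)}.$$
Markov's inequality applied to the non-negative random variable $e^{sL_0}$ with threshold $e^{s\mu_P'(s)}$ (valid since $s>0$) then gives
$$\p(L_0 \ge \mu_P'(s)) = \p\bigl(e^{sL_0} \ge e^{s\mu_P'(s)}\bigr) \le e^{-s\mu_P'(s)}\,\e[e^{sL_0}] = e^{\mu_P(s) - s\mu_P'(s)},$$
which is precisely \eqref{bound1}.

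For the second inequality, I would mirror the same argument under $Y \sim P(\cdot|1)$.  The analogous MGF computation gives
$$\e[e^{-(1-s)L_1}] = \sum_y P(y|1)\Bigl(\tfrac{P(y|0)}{P(y|1)}\Bigr)^{1-s} = \sum_y P(y|0)^{1-s}P(y|1)^s = e^{\mu_P(s)},$$
and Markov's inequality applied to $e^{-(1-s)L_1}$ (valid since $1-s>0$) yields
$$\p(L_1 \le \mu_P'(s)) = \p\bigl(e^{-(1-s)L_1} \ge e^{-(1-s)\mu_P'(s)}\bigr) \le e^{(1-s)\mu_P'(s)}\,\e[e^{-(1-s)L_1}] = e^{\mu_P(s) + (1-s)\mu_P'(s)},$$
which is \eqref{bound2}.

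There is no real obstacle here: everything reduces to the two MGF identities and a single application of Markov's inequality, and no optimization over $s$ is needed because the threshold $\mu_P'(s)$ is supplied in the statement.  Conceptually, $\mu_P'(s)$ is the mean of $L$ under the $s$-tilted distribution $P_s(y) \propto P(y|0)^{1-s}P(y|1)^s$, which is why this specific choice produces matching exponents in the two bounds and makes the inequalities exponentially tight.  The restriction to $s \in (0,1)$ in the statement ensures that $\mu_P$ is differentiable at $s$ and that both exponents $s$ and $1-s$ used in the Chernoff steps are strictly positive, so no boundary subtleties arise.
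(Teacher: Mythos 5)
Your proof is correct, and since the paper itself does not prove this lemma but cites it directly from \cite[Proof of Thm.~5]{shannon1967}, the comparison is against that source, which uses precisely the same tilted-MGF / Chernoff argument (bound the indicator $\mathbbm{1}\{l(y)\ge T\}$ by $e^{s(l(y)-T)}$, sum, and pick $T=\mu_P'(s)$). Your computations of $\e[e^{sL_0}]=e^{\mu_P(s)}$ and $\e[e^{-(1-s)L_1}]=e^{\mu_P(s)}$ and the subsequent Markov steps are exactly right, and the observation that $\mu_P'(s)$ is the tilted mean is the right way to understand why these bounds are exponentially tight.
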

    
    We now proceed with the proof of Lemma \ref{lem:tech_lem}.  First suppose that $l$ satisfies
    \begin{equation}
        \lim_{s\rightarrow 0^+} \mu'_P(s) \leq l \leq \lim_{s\rightarrow 1^-} \mu'_P(s).
        \label{l-bound}
    \end{equation}
    Then, since $\mu'_P(s)$ is continuous and non-decreasing \cite{shannon1967}, there exists $t \in [0,1]$ such that $l = \mu_P'(t)$, and by \eqref{bound1} and \eqref{bound2} (applied to the $k$-fold product distribution), we obtain
    \begin{equation}
        \ln \p(L_{0} \geq l) \leq k(\mu_P(t) - t \mu_P'(t)),
        \label{ll1}
    \end{equation}
    and 
    \begin{equation}
        \ln \p(L_1 \leq l) \leq k(\mu_P(t) + (1-t) \mu_P'(t)).
        \label{ll2}
    \end{equation}
    On the other hand, if
    \begin{equation}
        l < \lim_{s\rightarrow 0^+} \mu'_P(s),
    \end{equation}
    then we may set $t=0$, and \eqref{ll1} still holds due to the fact that $\mu_P(0)=0$. In this case, we have
    \begin{equation}
        \ln \p(L_1\leq l) \leq \ln \p\Big(L_1 \leq \lim_{s\rightarrow 0^+} \mu_P'(s)\Big) \stackrel{(a)}{\leq} k \mu_P'(0),
    \end{equation}
    where (a) follows from fact that \eqref{ll2} holds when $l = \mu_P'(t)$.  It follows that \eqref{ll2} also holds here with $t=0$.  By a similar argument, if $l > \lim_{s\rightarrow 1^-} \mu'_P(s)$, then we may set $t=1$, and \eqref{ll1} and \eqref{ll2} remain valid.  Therefore, for every value of $l$ (possibly $\pm \infty$), there exists some $t \in [0,1]$ such that equations \eqref{ll1} and \eqref{ll2} hold.
    
    
    Since $\mu_P$ is convex (see \cite[Thm.~5]{shannon1967}), $\mu_P(\bar{s})$ is lower bounded by its tangent line approximation at $t$.   This implies
    \begin{equation}
        \mu_P(t) + (\bar{s}-t) \mu_P'(t) \leq \mu_P(\bar{s}) \leq \mu_{\max},
    \end{equation}
    which re-arranges to give
    \begin{equation}
        (1-\bar{s})(\mu_P(t)-t\mu_P'(t)) \leq \mu_{\max} - \bar{s}(\mu_P(t)+(1-t)\mu_P'(t)).
    \end{equation}
    Multiplying  through by $k > 0$ and dividing through by $\mu_{\max} < 0$, we obtain
    \begin{equation}
        \frac{1-\bar{s}}{\mu_{\max}} k(\mu_P(t) - t \mu_P'(t)) \geq	k - \frac{\bar{s}}{\mu_{\max}} k(\mu_P(t) + (1-t) \mu_P'(t)).
        \label{tangent-line}
    \end{equation}
    Finally, further upper bounding the left-hand side via \eqref{ll1} and further lower bounding the right-hand side via \eqref{ll2}, we deduce the desired inequality in \eqref{f-bound-1}.
    
    \subsection{Proof of Lemma \ref{lem:non_fb_exponent} (1-Hop Exponent Without Feedback)} \label{app:non_fb_exponent}
    
    Let $\vec{x}$ and $\vec{x}'$ be the two codewords in the 1-hop communication problem without feedback, let $\vec{y}$ denote the received sequence, and define the $n$-letter version of $\mu$ as
    \begin{equation}
        \mu_{n,\gamma}(s) = \ln \sum_{\vec{y}} \p( \vec{Y} = \vec{y} | \vec{X} = \vec{x}' )^{1-s} \p( \vec{Y} = \vec{y} \,|\, \vec{X}= \vec{x} )^{s}.
    \end{equation}
    Using the classical analysis of Shannon, Gallager, and Berlekamp \cite[Thm.~5]{shannon1967} (see also Lemma \ref{lem:berlekamp} regarding the case $P=Q$), the error exponent attained by optimal (maximum-likelihood) decoding is \cite[Thm.~5]{shannon1967}\footnote{The $\sqrt{\mu''_n(s)}$ term in \cite[Thm.~5]{shannon1967} scales as $O(\sqrt n)$ due to the additive property of $\mu$ for product distributions, and thus, it does not affect the error exponent.}
    \begin{equation}
        \lim_{n \to \infty}  -\frac{1}{n} \min_{s \in [0,1]} \mu_n(s), \label{eq:lim_mu}
    \end{equation}
    and accordingly, we proceed by analyzing $\mu_n(s)$.
    
    Without loss of optimality, we can assume that $\vec{x}$ and $\vec{x}'$ differ in every entry.  Let $\beta_P$ be the fraction of uses of $P$ for which $\vec{x}$ takes value $1$ (and hence $\vec{x}'$ takes value $0$), and define $\beta_Q$ analogously.  Then, the additive property of $\mu$ for product distributions (by taking the logarithm in \eqref{mult2}) gives
    \begin{multline}
        \mu_{n,\gamma}(s) = \beta_P\gamma n \mu_P(s) + (1-\beta_P)\gamma n \mu_P(1-s) \\ + \beta_Q(1-\gamma) n \mu_Q(s) + (1-\beta_Q) (1-\gamma) n \mu_Q(1-s),
        \label{86}
    \end{multline}
    noting that by the definition of $\mu_P$, swapping the two inputs amounts to replacing $s$ by $1-s$.
    
    To further simplify \eqref{86}, we define
    \begin{gather}
        \mu^* = \min_{s \in [0,1]} \, \max(\mu_P(s), \mu_Q(s)),
        \label{87} \\
        s^* = \argmin_{s \in [0,1]} \, \max(\mu_P(s), \mu_Q(s)),
        \label{88}
    \end{gather}
    and introduce the shorthand
    \begin{equation}
        \mutilde_{n,\gamma}(s) = \gamma n \mu_P(s) + (1-\gamma)n \mu_Q(s). \label{eq:mutilde}
    \end{equation}
    We first provide a lemma lower bounding $\mutilde_{n,\gamma}(s)$, and then show that $\mutilde_{n,\gamma}(s)$ is itself a lower bound on $\mu_{n,\gamma}(s)$.
    
    \begin{lemma} \label{lem:gamma} There exists a choice of $\gamma \in [0,1]$ such that for all $s \in [0,1]$, it holds that $\mutilde_{n,\gamma}(s) \geq n \mu^*$.
    \end{lemma}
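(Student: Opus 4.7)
The plan is to recognize Lemma~\ref{lem:gamma} as a convex--concave minimax duality statement. Define the auxiliary bivariate function
\[
f(s,\gamma) = \gamma\mu_P(s) + (1-\gamma)\mu_Q(s),\qquad (s,\gamma)\in[0,1]^2.
\]
For fixed $\gamma$, $f$ is a nonnegative combination of $\mu_P$ and $\mu_Q$, both of which are convex in $s$ by \cite[Thm.~5]{shannon1967}, so $f(\cdot,\gamma)$ is convex. For fixed $s$, $f(s,\cdot)$ is linear in $\gamma$ and hence concave. Continuity of $\mu_P,\mu_Q$ on $[0,1]$ and compactness/convexity of $[0,1]$ supply the remaining hypotheses of Sion's minimax theorem.

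The key observation is that the maximum over $\gamma\in[0,1]$ of a linear function is attained at an endpoint, so $\max_{\gamma\in[0,1]} f(s,\gamma) = \max(\mu_P(s),\mu_Q(s))$. Combining this with the definition of $\mu^*$ in \eqref{87} gives $\min_s \max_\gamma f(s,\gamma) = \mu^*$. Applying Sion's theorem to swap the two operations yields
\[
\max_{\gamma\in[0,1]}\,\min_{s\in[0,1]} f(s,\gamma) \;=\; \mu^*,
\]
and a maximizer $\gamma^*$ exists by compactness/continuity. By definition of the inner minimum, $\gamma^*$ satisfies $f(s,\gamma^*)\ge \mu^*$ for every $s\in[0,1]$. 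Multiplying by $n$ and using the definition \eqref{eq:mutilde} gives the claim $\mutilde_{n,\gamma^*}(s)\ge n\mu^*$.

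I do not anticipate a serious obstacle: the result is the standard saddle-point duality for a separable convex-linear problem. The only thing to check carefully is that Sion's theorem applies at the closed endpoints $s\in\{0,1\}$ and $\gamma\in\{0,1\}$, which follows from the continuous extensions of $\mu_P$ and $\mu_Q$ noted in Section~\ref{sec:1hop}. As a Sion-free back-up, one can argue directly by a case split on whether $\mu_P(s^*)=\mu_Q(s^*)$: in the equal case, the first-order optimality of $s^*$ for $\max(\mu_P,\mu_Q)$ lets one choose $\gamma^*\in[0,1]$ so that a convex combination of subgradients of $\mu_P$ and $\mu_Q$ at $s^*$ vanishes, making $s^*$ a global minimizer of $f(\cdot,\gamma^*)$ with value $\mu^*$; in the unequal case (say $\mu_P(s^*)<\mu_Q(s^*)=\mu^*$), $\max(\mu_P,\mu_Q)$ coincides with $\mu_Q$ on a neighborhood of $s^*$, so by convexity $s^*$ globally minimizes $\mu_Q$ and $\gamma^*=0$ works (symmetrically $\gamma^*=1$ in the reverse direction).
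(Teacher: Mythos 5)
Your proof is correct and takes a genuinely different route from the paper. The paper proves Lemma~\ref{lem:gamma} by an explicit case analysis on the minimizer $s^*$ of $\max(\mu_P,\mu_Q)$: it separately handles the case where the larger of $\mu_P(s^*),\mu_Q(s^*)$ is a global minimizer (set $\gamma\in\{0,1\}$), rules out the case where it is not, and in the tie case $\mu_P(s^*)=\mu_Q(s^*)$ constructs $\gamma = \mu'_Q(s^*)/(\mu'_Q(s^*)-\mu'_P(s^*))$ so that $\mutilde'_{n,\gamma}(s^*)=0$, using convexity to conclude. You instead recognize the statement as a saddle-point existence claim for $f(s,\gamma)=\gamma\mu_P(s)+(1-\gamma)\mu_Q(s)$, which is convex in $s$ (nonnegative combination of convex functions) and affine in $\gamma$, and invoke Sion's minimax theorem on the compact square $[0,1]^2$. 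Since $\max_{\gamma\in[0,1]}f(s,\gamma)=\max(\mu_P(s),\mu_Q(s))$ by linearity in $\gamma$, swapping $\min_s$ and $\max_\gamma$ yields a maximizer $\gamma^*$ with $\min_s f(s,\gamma^*)=\mu^*$, which is exactly the claim. Both proofs are valid. Your approach is shorter and more conceptual but relies on a black-box minimax theorem; the paper's is elementary, self-contained, and produces an explicit formula for $\gamma$. One small caveat: your Sion-free backup sketches the same case split as the paper, but the "equal case with a vanishing convex combination of subgradients" argument as stated only covers interior $s^*$; at an endpoint the first-order condition is a one-sided inequality rather than $0\in\partial$, and one must instead pick the endpoint $\gamma\in\{0,1\}$ corresponding to whichever derivative is correctly signed. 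The Sion route sidesteps this cleanly, so your primary argument is fine as given.
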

    \begin{proof}
        We split the proof into several cases.  Throughout, it should be kept in mind that $\mu_P(s)$ is a continuous and convex function \cite{shannon1967}.
        
        \underline{Case 1}: If $\mu_P(s^*) \geq \mu_Q(s^*)$ and $\mu_P$ attains its global minimum\footnote{Here ``global minimum'' is meant with respect to the restricted domain $s \in [0,1]$.} at $s^*$, then we can set $\gamma=1$ to get
        \begin{align}
            \mutilde_{n,\gamma}(s) &= n\mu_P(s) \geq n\mu_P(s^*) \nonumber \\ &= n\cdot \max(\mu_P(s^*), \mu_Q(s^*)) = n\mu^*.
        \end{align}
        A similar argument holds when $\mu_Q(s^*)\geq \mu_P(s^*)$ and $\mu_Q$ attains its global minimum at $s^*$ (setting $\gamma=0$).
        
        \underline{Case 2}: If $\mu_P(s^*) > \mu_Q(s^*)$ and $\mu_P$ does not attain its global minimum at $s^*$, then we can shift $s^*$ towards $P$'s minimizer to decrease $\max(\mu_P(s), \mu_Q(s))$, contradicting the definition of $s^*$.  Hence, this case does not occur.  A similar finding applies to the case that $\mu_Q(s^*) > \mu_P(s^*)$ and $\mu_Q$ does not attain its global minimum at $s^*$
        
        \underline{Case 3}: We claim that the above cases only leave the case that $\mu_P(s^*) = \mu_Q(s^*)$ and neither $\mu_P$ nor $\mu_Q$ attain their global minimum at $s^*$.  To see this, note that if $\mu_P(s^*) \neq \mu_Q(s^*)$, then the larger one attaining its global minimum would give Case 1, whereas the larger one not attaining its global minimum would give Case 2.  Hence, we may assume that $\mu_P(s^*) = \mu_Q(s^*)$.  Then, we can additionally assume that neither $\mu_P$ nor $\mu_Q$ attain their global minimum at $s^*$, since otherwise we would again be in Case 1.  We proceed with two further sub-cases.
        
        {\em Case 3a:} If $\mu'_P(s^*)\mu'_Q(s^*)>0$, then we can shift $s^*$ by a small amount and decrease $\max(\mu_P(s), \mu_Q(s))$, contradicting the fact that $s^*$ is the minimizer (unless we are at an endpoint of the interval $[0,1]$, in which case one of $\mu_P$ and $\mu_Q$ must be a global minimum, again a contradiction).  Hence, this sub-case does not occur.
        
        {\em Case 3b:} If $\mu'_P(s^*)\mu'_Q(s^*)\leq 0$, then we choose
        \begin{equation}
            \gamma = \frac{\mu'_Q(s^*)}{\mu'_Q(s^*)-\mu'_P(s^*)}, \quad 1-\gamma = \frac{\mu'_P(s^*)}{\mu'_P(s^*)-\mu'_Q(s^*)}.
        \end{equation}
        The assumption $\mu'_P(s^*)\mu'_Q(s^*)\leq 0$ ensures that these quantities both lie in $[0,1]$.  Then, taking the derivative in \eqref{eq:mutilde}, we obtain
        \begin{equation}
            \mutilde'_{n,\gamma}(s) = (\gamma n \mu'_P(s) + (1-\gamma) n \mu'_Q(s))|_{s=s^*} = 0.
        \end{equation}
        Since $\mutilde_{n,\gamma}$ is convex (following from $\mu_P$ and $\mu_Q$ being convex), this implies that $\mutilde_{n,\gamma}$ must attain its global minimum at $s=s^*$. In addition, since $\mu_P(s^*) = \mu_Q(s^*)$, their common value must equal $\mu^*$, which further implies $\mutilde_{n,\gamma}(s) \geq n \cdot \mu^*$. 
    \end{proof}
    
    We now relate $\mu_{n,\gamma}(s)$ to $\mutilde_{n,\gamma}(s)$.  Recall from Assumption \ref{assump1} that $\mu_P(s) \le \mu_P(1-s)$ and  $\mu_Q(s) \le \mu_Q(1-s)$.  If $s \in [0,\frac{1}{2}]$, then substituting these into \eqref{86} gives
    \begin{equation}
        \mu_{n,\gamma}(s) \geq \mutilde_{n,\gamma}(s). \label{eq:final1a}
    \end{equation}
    On the other hand, if $s \in [\frac{1}{2},1]$, then Assumption \ref{assump1} gives $\mu_{n,\gamma}(s) \geq \mutilde_{n,\gamma}(1-s)$, and substitution into \eqref{86} gives 
    \begin{equation}
        \mu_{n,\gamma}(s) \geq \mutilde_{n,\gamma}(1-s). \label{eq:final1b}
    \end{equation}
    Combining these two cases, we have for all $s \in [0,1]$ that $\mu_{n,\gamma}(s) \geq \mutilde_{n,\gamma}(s')$ for one of the two choices $s' \in \{s,1-s\}$, and hence, Lemma \ref{lem:gamma} implies there exists $\gamma \in [0,1]$ such that
    \begin{equation}
        \min_{s \in [0,1]} \mu_{n,\gamma}(s) \ge n\mu^*. \label{eq:final2}
    \end{equation}
    In view of the fact that \eqref{eq:lim_mu} is the optimal error exponent, we have shown that there always exists $\gamma \in [0,1]$ such that the error exponent is upper bounded by $-\mu^*$. 
    
    To complete the proof, we also need to show that for all $\gamma \in [0,1]$, there exist choices of the codewords and $s \in [0,1]$ such that $\mu_{n,\gamma}(s) \le n\mu^*$.  This is much simpler, following directly by setting $\beta_P = \beta_Q = 1$ (i.e., choosing the all-zeros and all-ones codewords) in \eqref{86}, setting $s = s^*$, upper bounding both $\mu_P(s^*)$ and $\mu_Q(s^*)$ by $\mu^*$.
    
    \subsection{A Case Where Feedback Increases the 1-Hop Exponent} \label{app:fb_example}
    
    Let $P$ be a BSC with crossover probability $p \in \big(0,\frac{1}{2}\big)$, and let $Q$ be a reverse Z-channel (RZ-channel for short) with $Q(1|1) = 1$ and $Q(1|0) = q \in (0,1)$.  Note that Assumption \ref{assump1} holds for these channels, as discussed just above Lemma \ref{lem:non_fb_exponent}.
    Consider the following communication strategy with partial feedback:
    \begin{itemize}
        \item Repeatedly input $\Theta$ to the BSC in the first $\gamma n$ channel uses.
        \item After observing the feedback, if more than half the BSC bits are flipped, send the all-$0$s sequence over the RZ-Channel in $(1-\gamma)n$ uses, and otherwise send the all-$1$s sequence.
    \end{itemize}
    At the decoder, an initial estimate $\tilde{\Theta}$ is formed based on maximum-likelihood decoding (i.e., a majority vote) on the $\gamma n$ received BSC symbols.  Then, after receiving the RZ-channel output, the decision is reversed (i.e., $\hat{\Theta} = 1-\tilde{\Theta}$) if any $0$s are observed among the $(1-\gamma)n$ received symbols; otherwise, the decision is kept (i.e., $\hat{\Theta} = \tilde{\Theta}$).  Essentially, the Z-channel uses are used to tell the decoder whether the initial decision should be changed or not.
    
    If the initial decision $\tilde{\Theta}$ is correct, then the final decision is correct with conditional probability one, since the all $1$s sequence is received noiselessly.  On the other hand, if the initial decision is incorrect, then the probability that it fails to be reversed is $q^{(1-\gamma)n}$.  Hence, an error only occurs if both steps fail, and the overall error exponent $E$ of this strategy is given by
    \begin{equation}
        E = \gamma E_{\rm BSC} + (1-\gamma) \ln \frac{1}{q}, \label{eq:E_fb}
    \end{equation}
    where $E_{\rm BSC} = D\big(\frac{1}{2} \,\big\| p\big) = \frac{1}{2} \ln \big(\frac{1}{4p(1-p)}\big)$ is the optimal error exponent for transmitting two messages over a BSC (see \eqref{eq:1hop_bsc}).
    
    In Figure \ref{fig:fb_exponent}, we compare the exponent in \eqref{eq:E_fb} to that without feedback (Lemma \ref{lem:non_fb_exponent}), setting $p = 0.2$ and $q = 0.8$, which turns out to make the optimal exponents of $P$ and $Q$ identical.  In this example, we see that feedback strictly increases the exponent for all $\gamma \in (0,1)$.  Intuitively, the improvement comes from being able to ``reverse'' initial ``bad events'' to turn them into good events, which is not possible in the absence of feedback.
    
    \begin{figure}
        \begin{centering}
            \includegraphics[width=0.95\columnwidth]{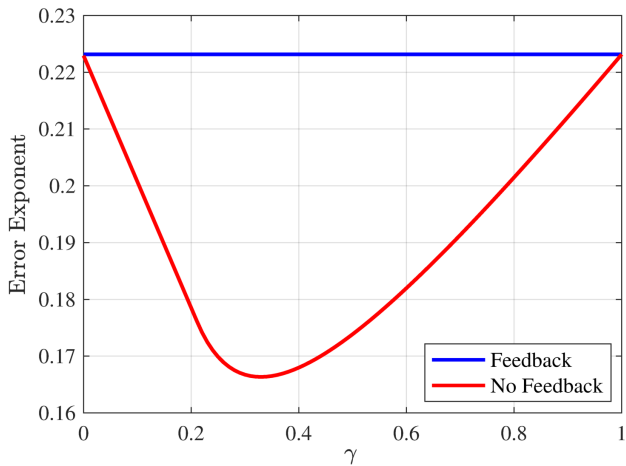}
            \par
        \end{centering}
        
        \caption{Comparison of the error exponent under the two-round coding scheme with feedback vs.~the error exponent without feedback, when $P$ is a BSC with parameter $p = 0.2$, and $Q$ is a Z-channel with parameter $q = 0.8$. \label{fig:fb_exponent}}
    \end{figure}
    
    \subsection{Proof of Lemma \ref{lem:symmQ} (Tightness for Symmetric $Q$)} \label{app:symmQ}
    
    Throughout the proof, we let $\p(\vec{y}|0)$ be a shorthand for $\p(\vec{Y}=\vec{y}|\Theta=0)$, and similarly for $\p(\vec{y}|1)$ and other analogous quantities.
    
    We make use of Theorem \ref{thm:conv}, and accordingly consider a (feedback) communication scenario with $\gamma n$ uses of $P$ and $(1-\gamma)n$ uses of $Q$.  Since we assume $\Theta$ to be equiprobable on $\{0,1\}$, the optimal decoding rule given a received sequence $\vec{y}$ is to set $\hat{\Theta} = 1$ if $\p(\vec{Y}=\vec{y}|\Theta=1) > \p(\vec{Y}=\vec{y}|\Theta=0)$, and $\hat{\Theta} = 0$ otherwise.  Under this optimal rule, the error probability $P_{\rm e} = \p(\hat{\Theta} \ne \Theta)$ is given by \cite[Eq.~(3.2)]{shannon1967}
    \begin{align}
        P_{\rm e} = \frac{1}{2} \sum_{\vec{y}} \min(  \p(\vec{y}|0), \p(\vec{y}|1) ). \label{eq:pe_init}
    \end{align}
    
    We momentarily consider the case that there is no feedback, in which we can simply define $\vec{x}$ and $\vec{x}'$ to be the two length-$n$ codewords (one per message), and choose these codewords to minimize the error probability,  yielding
    \begin{align}
        P_{\rm e}^{\text{no-fb}} = \frac{1}{2} \min_{\vec{x},\vec{x}'} \sum_{\vec{y}} \min( \p(\vec{y}|\vec{x}), \p(\vec{y}|\vec{x}') ). \label{eq:no-fb}
    \end{align}
    Since this expression is still exactly equal to smallest possible error probability, its corresponding exponent is $E_1(P,Q,\gamma)$, defined analogously to $E_2(P,Q,\gamma)$ but without feedback.
    
    In the two-round setting with feedback considered in Theorem \ref{thm:conv}, slightly more care is needed.  We proceed by explicitly splitting up the codeword and received sequence as $\vec{x} = (\vec{x}_P,\vec{x}_Q)$ and $\vec{y} = (\vec{y}_P,\vec{y}_Q)$, and writing the following for $\theta \in \{0,1\}$:
    \begin{equation}
        \p(\vec{y}|\theta) = P^{\gamma n}(\vec{y}_P|\vec{x}_P(\theta))Q^{(1-\gamma)n}(\vec{y}_Q|\vec{x}_Q(\vec{y}_P,\theta)),
    \end{equation}
    where $\vec{x}_Q(\vec{y}_P,\theta)$ explicitly writes $\vec{x}_Q$ as a function of $\vec{y}_P$ and $\theta$, and similarly  for $\vec{x}_P = \vec{x}_P(\theta)$.  Substituting into \eqref{eq:pe_init}, we obtain
    \begin{align}
        &P_{\rm e}=\frac{1}{2} \sum_{\vec{y}_P,\vec{y}_Q} \min\big(  P^{\gamma n}(\vec{y}_P|\vec{x}_P(0))Q^{(1-\gamma)n}(\vec{y}_Q|\vec{x}_Q(\vec{y}_P,0)), \nonumber \\ 	
            &\qquad\qquad P^{\gamma n}(\vec{y}_P|\vec{x}_P(1))Q^{(1-\gamma)n}(\vec{y}_Q|\vec{x}_Q(\vec{y}_P,1)) \big) \\
        &\ge \frac{1}{2} \min_{\vec{x}_P,\vec{x}'_P} \sum_{\vec{y}_P} \min_{\vec{x}_Q,\vec{x}'_Q} \sum_{\vec{y}_Q} \min\big( P^{\gamma n}(\vec{y}_P|\vec{x}_P)Q^{(1-\gamma)n}(\vec{y}_Q|\vec{x}_Q), \nonumber \\
            &\qquad\qquad P^{\gamma n}(\vec{y}_P|\vec{x}'_P)Q^{(1-\gamma) n}(\vec{y}_Q|\vec{x}'_Q)  \big). \label{eq:sum_order}
    \end{align}
    We now observe that this expression coincides with the non-feedback case given in \eqref{eq:no-fb}, except that the minimum over $(\vec{x}_Q,\vec{x}'_Q)$ comes after the summation over $\vec{y}_P$ (since the former is allowed to depend on the latter).  
    
    While this ordering can be significant in general, we now show that it has no effect when $Q$ is a BSC.  Without loss of optimality, we can assume that $\vec{x}_Q$ and $\vec{x}'_Q$ differ in every entry, since any entries that coincide would be independent of $\Theta$ (at least conditionally given $\vec{y}_P$) and reveal no information (or viewed differently, could be simulated at the decoder).  However, when $Q$ is a BSC, its symmetry implies that any quantity of the form 
    \begin{equation}
        \sum_{\vec{y}_Q} \min( a Q^{(1-\gamma)n}(\vec{y}_Q|\vec{x}_Q), b Q^{(1-\gamma)n}(\vec{y}_Q|\vec{x}'_Q) )
    \end{equation}
    (with constants $a$ and $b$) is identical for all such $(\vec{x}_Q,\vec{x}'_Q)$ pairs; the precise choice only amounts to re-ordering terms in the summation over $\vec{y}_Q$.  It follows that interchanging the order of $\sum_{\vec{y}_P}$ and $\min_{\vec{x}_Q,\vec{x}'_Q}$ in \eqref{eq:sum_order} has no impact, and we recover the non-feedback expression \eqref{eq:no-fb}, whose error exponent is defined to be $E_1(P,Q,\gamma)$.
    
    It only remains to show that the error exponent $E_1(P,Q,\gamma)$ is no higher than the learning rate in \eqref{dmc-learning-rate}.  To show this, we re-use some of the findings from Appendix \ref{app:non_fb_exponent}.\footnote{Assumption \ref{assump1} was not applied until {\em after} Lemma \ref{lem:gamma} in Appendix \ref{app:non_fb_exponent}, and we will not use it here.}  In particular, the error exponent is still given by \eqref{eq:lim_mu}, and $\mu_{n,\gamma}(s)$ still satisfies \eqref{86}, assuming without loss of optimality that the two codewords $\vec{x}$ and $\vec{x}'$ differ in every entry.
    
    For any fixed $s$, we can make $\mu_{n,\gamma}(s)$ in \eqref{86} smaller or equal by choosing both $\beta_P$ and $\beta_Q$ to be either zero or one (e.g., if $\mu_P(s) < \mu_P(1-s)$ then we set $\beta_P= 1$).  That is, without loss of optimality, every input to  $P$ is identical, and every input to $Q$ is identical.  Assuming without loss of generality that $\beta_P = 1$, and noting that the symmetry of the BSC implies $\mu_Q(s) = \mu_Q(1-s)$, it follows that we can simplify \eqref{86} to
    \begin{equation}
        \mu_{n,\gamma}(s) = \gamma n \mu_P(s) + (1-\gamma)n \mu_Q(s).
    \end{equation}
    This precisely coincides with the quantity $\mutilde_{n,\gamma}(s)$ introduced in \eqref{eq:mutilde}, and hence, we can directly apply Lemma \ref{lem:gamma} to obtain $\mu_{n,\gamma}(s) \ge n \mu^*$ for some $\gamma \in [0,1]$ and all $s \in [0,1]$, where $\mu^*$ is defined in \eqref{87}. Since $\mu^*$ coincides with \eqref{dmc-learning-rate}, this completes the proof.
    
    \subsection{Proof of Corollary \ref{cor:strict_improvement} (Strict Improvement in the Converse)} \label{app:strict_improvement}
    
    To see that the converse in Theorem \ref{thm:conv} can strictly improve on the trivial one in \eqref{naive-bound}, we re-use the example shown in Figure \ref{fig:fb_exponent} in Appendix \ref{app:fb_example}, but with $P$ and $Q$ interchanged (i.e., $P$ is a reverse Z-channel with parameter $0.8$, and $Q$ is a BSC with parameter $0.2$).  In this case, it is straightforward to verify that the ``No Feedback'' curve in Figure \ref{fig:fb_exponent} is simply flipped with respect to the mid-point $\gamma = \frac{1}{2}$, and in particular, all values of $\gamma \in (0,1)$ still give a strictly smaller exponent than the endpoints $\gamma = 0$ and $\gamma  = 1$.
    
    The key difference due to interchanging $P$ and $Q$ is that in view of the above proof of Lemma \ref{lem:symmQ}, the feedback no longer helps, and hence, the same curve just discussed serves as a valid converse even with the single round of feedback.  Hence, by Theorem \ref{thm:conv} and the fact the simple converse in \eqref{naive-bound} corresponds to taking the better of $\gamma=0$ and $\gamma=1$, we conclude that a strict improvement has indeed been shown.
    
    While the reliance of the preceding proof on a numerical evaluation can be circumvented, we omit such an approach, since we believe that this example has no ambiguity with respect to potential numerical issues.
    
    \subsection{Proof of Theorem \ref{thm:block_conv} (Converse for General Block-Structured Protocols)} \label{app:block_converse}
    
    We first consider a single block.  Letting $Y$ denote the $k$ symbols received by the teacher, $\hat{X}$ denote the $k$ transmitted bits, and $W$ denote the block received by the student (here we omit the vector notation $\vec{(\cdot)}$ to lighten the exposition), we have the Markov chain $\Theta \rightarrow Y \rightarrow \hat{X} \rightarrow W$.
    
    We begin with an inequality on mixture distributions ({\em cf.}, Definition \ref{def:mixture}). Let $A$ and $B$ be probability distributions, and suppose $A$ follows a mixture distribution described by $[p_1,A_1; \ldots; p_k,A_k]$. Then, observe that
    \begin{align}
        \rho(A,B,s) 
        &= \sum_{x} \bigg(\sum_{i=1}^m p_i \p(A_i = x)\bigg)^{1-s} \p(B=x)^s \label{eq:mixture1} \\
        &\ge \sum_{i=1}^m p_i \sum_{x} \p(A_i = x)^{1-s} \p(B=x)^s  \label{eq:mixture2}  \\
        &\ge \min_{i=1,\dotsc,m} \rho(A_i,B,s),  \label{eq:mixture3} 
    \end{align}
    where \eqref{eq:mixture2} applies Jensen's inequality to the concave function $f(z) = z^{1-s}$ ($s \in [0,1]$), and \eqref{eq:mixture3} lower bounds the average by the minimum.  The same argument applies when $A$ is fixed and $B$ is a mixture.
    
    For convenience, define $\mu(A,B,s) = \ln \rho(A,B,s)$.  Considering two length-$k$ sequences $\vec{x},\vec{x}'$, by the additive property of $\mu$ (i.e., multiplicative property of $\rho$) for product distributions, the corresponding conditional distributions for $W$ satisfy
    \begin{equation}
        \mu(P_{W|\hat{X}=\vec{x}}, P_{W|\hat{X}=\vec{x}'},s) \geq  k_1\mu_Q(s) + k_2 \mu_Q(1-s),
        \label{decomposed-bound}
    \end{equation}
    where $k_1$ is the number of indices where $\vec{x}$ is zero and $\vec{x}'$ is one, and vice versa for $k_2$.   
    
    Supposing first that $s \in [0,\frac{1}{2}]$, we can use the assumption $\mu_Q(s) \leq \mu_Q(1-s)$ from Assumption \ref{assump1}, and weaken \eqref{decomposed-bound} to 
    \begin{equation}
        \mu(P_{W|\hat{X}=\vec{x}}, P_{W|\hat{X}=\vec{x}'},s) \geq  k \mu_Q(s)
        \label{decomposed-bound2}
    \end{equation}
    since $k_1+k_2 \leq k$.  Observe that \eqref{decomposed-bound2} gives a lower bound that holds uniformly with respect to $\vec{x},\vec{x}'$. Then, using the property \eqref{eq:mixture3} and the fact that $P_{W|\Theta=0}$ and $P_{W|\Theta=1}$ are mixture distributions mixed by $\hat{X}$, we see that \eqref{decomposed-bound2} implies (for $s \in [0,\frac{1}{2}]$) that
    \begin{equation}
        \mu(P_{W|\Theta=0}, P_{W|\Theta=1},s) \geq k \mu_Q(s). \label{eq:Qdep}
    \end{equation}
    This gives us a $Q$-dependent lower bound, and the $P$-dependent lower bound turns out to be simpler: Using the Markov chain $\Theta \rightarrow Y \rightarrow \hat{X}$, along with the data processing inequality for $\mu(\cdot,\cdot,s)$,\footnote{The data processing inequality for $\mu$ is equivalent to that for R\'enyi divergence, defined as $D_{\alpha}(P\|Q) = \frac{1}{\alpha - 1}\ln\sum_{x} P(x)^{\alpha} Q(x)^{1-\alpha}$.  For the data processing inequality for R\'enyi divergence, see for example \cite[Example 2]{Van14}.} we have
    \begin{equation}
        \mu(P_{W|\Theta=0}, P_{W|\Theta=1},s) \geq \mu(P_{Y|\Theta=0}, P_{Y|\Theta=1},s) = k \mu_P(s). \label{eq:data_proc}
    \end{equation}
    Combining the lower bounds, we have for all $s \in [0,\frac{1}{2}]$ that
    \begin{equation}
        \mu(P_{W|\Theta=0}, P_{W|\Theta=1}, s) \geq k\max(\mu_P(s), \mu_Q(s)).
    \end{equation}
    
    If $s \in [\frac{1}{2},1]$, then Assumption \ref{assump1} gives $\mu(s)\geq \mu(1-s)$, and the analog of \eqref{eq:Qdep} is
    \begin{equation}
        \mu(P_{W|\Theta=0}, P_{W|\Theta=1},s) \geq \mu_Q(1-s).
    \end{equation}
    Since \eqref{eq:data_proc} holds for all $s \in [0,1]$, it follows that for any $s \in [\frac{1}{2},1]$, we have
    \begin{equation}
        \mu(P_{W|\Theta=0}, P_{W|\Theta=1},s) \geq k\max(\mu_P(1-s), \mu_Q(1-s)).
    \end{equation}
    Combining the cases $s \in [0,\frac{1}{2}]$ and $s \in [\frac{1}{2},1]$, we deduce that
    \begin{equation}
        \min_{s \in [0,1]} \mu(P_{W|\Theta=0}, P_{W|\Theta=1},s) \geq     \min_{s \in [0,\frac{1}{2}]} k\max(\mu_P(s), \mu_Q(s)). \label{eq:single_block}
    \end{equation}
    
    Having studied a single block, we are now in a position to study the overall combination of $\frac{n}{k}-1$ blocks,\footnote{The first block received by the student does not depend on $\Theta$, and provides no distinguishing power.} defining the $n$-letter version of $\mu$ as
    \begin{equation}
        \mu_n(s) = \ln \sum_{\vec{z}} \p( \vec{Z} = \vec{z} \,|\, \Theta=0 )^{1-s} \p( \vec{Z} = \vec{z} \,|\, \Theta=1 )^{s},
    \end{equation}  
    where $\vec{Z}$ is the entire length-$n$ sequence received by the student.  Assuming momentarily that the teacher employs an identical strategy within each block, the additive property of $\mu$ for product distributions (arising from $P$ and $Q$ being memoryless) gives
    \begin{equation}
        \mu_n(s) = \left(\frac{n}{k}-1\right) \mu(P_{W|\Theta=0}, P_{W|\Theta=1},s), \label{eq:mu_n_lb0}
    \end{equation}
    and substituting \eqref{eq:single_block} gives
    \begin{equation}
        \min_{s \in [0,1]} \mu_n(s) \geq \min_{s \in [0,\frac{1}{2}]} (n-k)\max(\mu_P(s), \mu_Q(s)). \label{eq:mu_n_lb}
    \end{equation}
    In fact, this lower bound holds even if the teacher applies a different strategy between blocks, since we showed that \eqref{eq:single_block} holds regardless of the strategy used within the block.
    
    Letting $s^* = \argmin_{s \in [0,1]} \mu_n(s)$, we have from \cite[Thm.~5]{shannon1967} (a more general form of Lemma \ref{lem:berlekamp}) that, for any decoding rule employed by the student, the error probability is lower bounded by
    \begin{equation}
        \mathbb{P}(\hat{\Theta}_n \neq \Theta) \geq \frac18\exp(\mu_n(s^*)-\sqrt{2\mu''_n(s^*)}),
        \label{error-after-n-steps}
    \end{equation}
    where $\mu''_n(s)$ denotes the second derivative when $s \in (0,1)$, and the appropriate limiting value when $s \in \{0,1\}$.  Again using the additive property of $\mu$, the following holds if the teacher employs the same per-block strategy in each block:
    \begin{equation}
        \mu_n''(s) = \Big(\frac{n}{k} - 1\Big) \mu''(P_{W|\Theta=0}, P_{W|\Theta=1},s). \label{eq:mu''}
    \end{equation}
    Once again, our analysis also extends immediately to the case of varying per-block strategies.
    
    In Lemma \ref{lem:quadratic} below, we show that $\mu''(P_{W|\Theta=0}, P_{W|\Theta=1},s) = O(k^2)$, and substitution into \eqref{eq:mu''} gives $\mu''_n(s) = O(nk)$.  The assumption $k = o(n)$ then yields $\mu''_n(s) = o(n^2)$, or $\sqrt{\mu''_n(s)} = o(n)$.  Substituting this scaling into \eqref{error-after-n-steps}, applying \eqref{eq:mu_n_lb}, and taking suitable limits, we obtain
    \begin{equation}
        \limsup_{n\rightarrow \infty}\, -\frac{1}{n} \ln \mathbb{P}(\Theta_n \neq \Theta) \leq - \min_{s \in [0,\frac{1}{2}]} \max(\mu_P(s), \mu_Q(s)). \label{eq:pf_final}
    \end{equation}
    We deduce Theorem \ref{thm:block_conv} by further upper bounding the right-hand side by expanding the minimum from $[0,\frac{1}{2}]$ to $[0,1]$; \eqref{eq:pf_final} additionally reveals that further restricting $s \in [0,\frac{1}{2}]$ is without loss of optimality, which is unsurprising given Assumption \ref{assump1}.
    
    It only remains to show the following.
    
    \begin{lemma} \label{lem:quadratic}
        For any fixed $P$ and $Q$, and any $s \in [0,1]$, it holds that $\mu''(P_{W|\Theta=0}, P_{W|\Theta=1},s) = O(k^2)$.
    \end{lemma}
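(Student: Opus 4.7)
The plan is to recognize $\mu''(\cdot, \cdot, s)$ as the variance of a log-likelihood ratio under a tilted distribution, and then to establish that this log-likelihood ratio is bounded by $O(k)$ on the relevant support. Specifically, writing $P_0 := P_{W|\Theta=0}$, $P_1 := P_{W|\Theta=1}$, and $\rho(s) = \sum_w P_0(w)^{1-s}P_1(w)^s = \mathbb{E}_{P_0}[e^{sL(W)}]$ with $L(w) = \log\frac{P_1(w)}{P_0(w)}$, a direct double differentiation of $\mu(s) = \log \rho(s)$ gives
\begin{equation}
\mu''(s) = \mathrm{Var}_{T_s}[L], \qquad T_s(w) \propto P_0(w)^{1-s}P_1(w)^s. \nonumber
\end{equation}
Hence it suffices to show $|L(w)| = O(k)$ uniformly over $w$ in the joint support of $P_0$ and $P_1$ (which contains the support of $T_s$), since then $\mu''(s) \le (\sup|L|)^2 = O(k^2)$.

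The next step is a two-stage propagation argument that transfers a per-symbol boundedness property from $P$ through the Markov chain $\Theta \to \hat{X}^k \to W$. Let $\pi_\theta(\vec{x}) = \sum_{\vec{y}:\hat X(\vec{y})=\vec{x}} P^k(\vec{y}|\theta)$ denote the induced conditional law of $\hat{X}^k$ given $\Theta=\theta$. Letting $C = \max_{y,\theta,\theta'} P(y|\theta)/P(y|\theta')$, we have $P^k(\vec{y}|1) \le C^k P^k(\vec{y}|0)$ for every $\vec{y}$, and summing over the preimage of each $\vec{x}$ yields $\pi_1(\vec{x}) \le C^k \pi_0(\vec{x})$ (and symmetrically $\pi_0(\vec{x}) \le C^k \pi_1(\vec{x})$).

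Finally, using the mixture representation $P_\theta(w) = \sum_{\vec{x}} \pi_\theta(\vec{x}) Q^k(w|\vec{x})$ and summing the $\pi$-inequality against the non-negative weights $Q^k(w|\vec{x})$, we obtain $P_1(w) \le C^k P_0(w)$ and $P_0(w) \le C^k P_1(w)$ for all $w$ in the joint support. This gives $|L(w)| \le k\log C = O(k)$, with constant depending only on $P$, and hence $\mu''(s) \le k^2 (\log C)^2 = O(k^2)$ as claimed.

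The main obstacle is the edge case where $P$ has zero transition probabilities, since then some per-symbol ratio $P(y|1)/P(y|0)$ is infinite, making $C=\infty$ and breaking the $\pi$-ratio bound. To handle this cleanly, one can apply a perturbation argument: replace $P$ by $P_\eta = (1-\eta)P + \eta \cdot \mathrm{Unif}$ to guarantee strict positivity and derive the bound for $P_\eta$, then take $\eta \to 0$ and invoke continuity of $\mu''(s)$ in the channel parameters on the interior. Alternatively, one can argue directly on the intersection of supports and absorb the extra factors into the constant, since $T_s$ assigns no mass outside this intersection; the propagation argument for $Q$ remains unaffected in either case.
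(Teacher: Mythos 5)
Your high-level structure matches the paper's proof: both recognize $\mu''(s)$ as the variance of the log-likelihood ratio $L(w) = \log\frac{P_1(w)}{P_0(w)}$ under the tilted distribution and reduce the problem to showing $|L| = O(k)$ on the joint support. Your way of getting that bound, however, differs from the paper's and contains a real gap.

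The propagation argument through $\pi_\theta(\vec{x})$ is elegant when it applies: since $Q^k(w\,|\,\vec{x}) \ge 0$ acts as a common set of weights, the ratio bound on $\pi_\theta$ transfers directly to $P_\theta(w)$, and the resulting constant $\log C$ depends only on $P$, not $Q$. But it crucially requires every per-symbol inequality $P(y|1) \le C\,P(y|0)$ to hold, which fails as soon as $P$ has a zero transition probability (e.g.\ a Z-channel). Your first proposed repair -- perturb to $P_\eta$, bound $\mu''_\eta(s) \le k^2(\log C_\eta)^2$, and let $\eta \to 0$ -- does not close the gap: if $P(y|\theta') = 0$ for some $(y,\theta')$, then $C_\eta \sim \mathrm{const}/\eta \to \infty$, so the right-hand side diverges and continuity of $\mu''$ yields no finite bound at $\eta = 0$. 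Your second alternative (``argue on the intersection of supports and absorb extra factors'') is the right instinct but is not actually carried out by the propagation argument, which still needs a per-$\vec{y}$ ratio bound.

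The paper's fix is both simpler and robust to zeros: for $\vec{w}$ with $P_\theta(\vec{w}) > 0$, at least one term $P^k(\vec{y}|\theta)\,Q^k(\vec{w}\,|\,\hat{\vec{x}}(\vec{y}))$ in the mixture sum is strictly positive, and every nonzero factor is at least $P_{\min}$ or $Q_{\min}$ (the smallest nonzero transition probabilities), so $P_\theta(\vec{w}) \ge P_{\min}^k Q_{\min}^k$. Combined with the trivial upper bound $P_\theta(\vec{w}) \le 1$, this gives $|L(\vec{w})| \le k\big(\log\tfrac{1}{P_{\min}} + \log\tfrac{1}{Q_{\min}}\big)$ on the joint support, with no positivity assumption on $P$. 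You should replace the $C$-based propagation step (or at least the edge-case handling) with this direct two-sided bound on $P_\theta(\vec{w})$; the rest of your argument then goes through.
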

    \begin{proof}
        To reduce notation, define $P_0 = P_{W|\Theta=0}$ and $P_1 = P_{W|\Theta=1}$.  It is shown in \cite[p.~85]{shannon1967} that $\mu''$ can be written as the variance of a log-likelihood ratio:
        \begin{equation}
            \mu''(P_0, P_1,s) = {\rm Var}\bigg[ \log \frac{P_1(\tilde{W})}{P_0(\tilde{W})} \bigg],
        \end{equation}
        where $\tilde{W}$ is distributed according to the following ``tilted'' distribution when $s \in (0,1)$:
        \begin{equation}
            \tilde{P}(\vec{w}) = \frac{ P_0(\vec{w})^{1-s} P_1(\vec{w})^s }{ \sum_{\vec{w}'} P_0(\vec{w}')^{1-s} P_1(\vec{w}')^s }. \label{eq:Ptilde}
        \end{equation}
        In addition, when $s$ equals an endpoint (i.e., 0 or 1), we can use the same expression for $\tilde{P}$, except that $\vec{w}$ must be restricted to satisfy both $P_0(\vec{w}) > 0$ and $P_1(\vec{w}) > 0$ (with $\tilde{P}(\vec{w}) = 0$ otherwise) \cite{shannon1967}.  
        
        We upper bound the variance by the second moment, i.e., $\e\big[ \big( \log \frac{P_1(\tilde{W})}{P_0(\tilde{W})} \big)^2 \big]$, and proceed by further upper bounding the latter.  By the definition of $\tilde{P}$, any $\vec{w}$ such that $P_0(\vec{w}) = 0$ or $P_1(\vec{w}) = 0$ does not contribute to the second moment.  On the other hand, for $\theta \in \{0,1\}$, if $P_{\theta}(\vec{w}) \ne 0$, then we have
        \begin{align}
            P_{\theta}(\vec{w}) 
            &= \sum_{\vec{y}} P^k(\vec{y}|\theta) Q^k(\vec{w}|\vec{\hat{x}}(\vec{y})) \\
            &\ge P_{\min}^k Q_{\min}^k, \label{eq:PminQmin}
        \end{align}
        where $P^k$ is the $k$-fold product of $P$ (and similarly for $Q^k$), $\vec{\hat{x}}(\vec{y})$ is the transmitted $\hat{X}$ sequence when the teacher receives $\vec{y}$, and $P_{\min},Q_{\min}$ are the smallest non-zero transition probabilities of $P$ and $Q$.  It follows that each non-zero $P_{\theta}(\vec{w})$ is bounded between $P_{\min}^k Q_{\min}^k$ and $1$, which implies that $\log \frac{P_1(\vec{w})}{P_0(\vec{w})} = O(k)$, and hence $\e\big[ \big( \log \frac{P_1(\tilde{W})}{P_0(\tilde{W})} \big)^2 \big] = O(k^2)$.
        
    \end{proof}
    
    \section*{Acknowledgment}
    
    This work was supported by the Singapore National Research Foundation (NRF) under grant number R-252-000-A74-281.
    
    \bibliographystyle{IEEEtran}
    \bibliography{general}

\begin{thebibliography}{10}
\providecommand{\url}[1]{#1}
\csname url@samestyle\endcsname
\providecommand{\newblock}{\relax}
\providecommand{\bibinfo}[2]{#2}
\providecommand{\BIBentrySTDinterwordspacing}{\spaceskip=0pt\relax}
\providecommand{\BIBentryALTinterwordstretchfactor}{4}
\providecommand{\BIBentryALTinterwordspacing}{\spaceskip=\fontdimen2\font plus
\BIBentryALTinterwordstretchfactor\fontdimen3\font minus
  \fontdimen4\font\relax}
\providecommand{\BIBforeignlanguage}[2]{{%
\expandafter\ifx\csname l@#1\endcsname\relax
\typeout{** WARNING: IEEEtran.bst: No hyphenation pattern has been}%
\typeout{** loaded for the language `#1'. Using the pattern for}%
\typeout{** the default language instead.}%
\else
\language=\csname l@#1\endcsname
\fi
#2}}
\providecommand{\BIBdecl}{\relax}
\BIBdecl

\bibitem{jog2020teaching}
V.~{Jog} and P.~L. {Loh}, ``Teaching and learning in uncertainty,'' \emph{IEEE
  Transactions on Information Theory}, vol.~67, no.~1, pp. 598--615, 2021.

\bibitem{onebit}
W.~Huleihel, Y.~Polyanskiy, and O.~Shayevitz, ``Relaying one bit across a
  tandem of binary-symmetric channels,'' \emph{IEEE International Symposium on
  Information Theory (ISIT)}, 2019.

\bibitem{schulman_1994}
S.~Rajagopalan and L.~Schulman, ``A coding theorem for distributed
  computation,'' \emph{ACM Symposium on Theory of Computing}, 1994.

\bibitem{vives1993fast}
X.~Vives, ``How fast do rational agents learn?'' \emph{The Review of Economic
  Studies}, vol.~60, no.~2, pp. 329--347, 1993.

\bibitem{jadbabaie2013information}
A.~Jadbabaie, P.~Molavi, and A.~Tahbaz-Salehi, ``Information heterogeneity and
  the speed of learning in social networks,'' \emph{Columbia Business School
  Research Paper}, no. 13-28, 2013.

\bibitem{molavi2017foundations}
P.~Molavi, A.~Tahbaz-Salehi, and A.~Jadbabaie, ``Foundations of non-bayesian
  social learning,'' \emph{Columbia Business School Research Paper}, no. 15-95,
  2017.

\bibitem{harel2020}
M.~Harel, E.~Mossel, P.~Strack, and O.~Tamuz, ``{Rational Groupthink},''
  \emph{The Quarterly Journal of Economics}, vol. 136, no.~1, pp. 621--668, 07
  2020.

\bibitem{el2011network}
A.~El~Gamal and Y.-H. Kim, \emph{Network information theory}.\hskip 1em plus
  0.5em minus 0.4em\relax Cambridge university press, 2011.

\bibitem{Bou13}
S.~Boucheron, G.~Lugosi, and P.~Massart, \emph{Concentration Inequalities: A
  Nonasymptotic Theory of Independence}.\hskip 1em plus 0.5em minus 0.4em\relax
  OUP Oxford, 2013.

\bibitem{shannon1967}
C.~Shannon, R.~Gallager, and E.~Berlekamp, ``Lower bounds to error probability
  for coding on discrete memoryless channels. i,'' \emph{Information and
  Control}, vol.~10, no.~1, p. 65–103, 1967.

\bibitem{Csi11}
I.~Csisz\'{a}r and J.~K\"{o}rner, \emph{Information Theory: Coding Theorems for
  Discrete Memoryless Systems}, 2nd~ed.\hskip 1em plus 0.5em minus 0.4em\relax
  Cambridge University Press, 2011.

\bibitem{Van14}
T.~{van Erven} and P.~{Harremos}, ``R\'enyi divergence and {K}ullback-{L}eibler
  divergence,'' \emph{IEEE Transactions on Information Theory}, vol.~60, no.~7,
  pp. 3797--3820, 2014.

\end{thebibliography}
    
     \begin{IEEEbiographynophoto}{Yan Hao Ling}
        received the B.Comp.~degree in computer science and the B.Sci.~degree 
        in mathematics from the National University of Singapore (NUS) in 2021. 
        He is now a PhD student in the Department of Computer Science at NUS.
        His research interests are in the areas of
        information theory, statistical learning, and theoretical computer science.
    \end{IEEEbiographynophoto}
    
     \begin{IEEEbiographynophoto}{Jonathan Scarlett}
        (S'14 -- M'15) received 
        the B.Eng.~degree in electrical engineering and the B.Sci.~degree in 
        computer science from the University of Melbourne, Australia. 
        From October 2011 to August 2014, he
        was a Ph.D. student in the Signal Processing and Communications Group
        at the University of Cambridge, United Kingdom. From September 2014 to
        September 2017, he was post-doctoral researcher with the Laboratory for
        Information and Inference Systems at the \'Ecole Polytechnique F\'ed\'erale
        de Lausanne, Switzerland. Since January 2018, he has been an assistant
        professor in the Department of Computer Science and Department of Mathematics,
        National University of Singapore. His research interests are in
        the areas of information theory, machine learning, signal processing, and
        high-dimensional statistics. He received the Singapore National Research Foundation (NRF)
        fellowship, and the NUS Early Career Research Award.
    \end{IEEEbiographynophoto}
    
\end{document}